\newcolumntype{C}{>{\centering\arraybackslash$}p{\linewidth}<{$}}
\theoremstyle{plain}
\newtheorem{theorem}{Theorem}
\newtheorem{remark}{Remark}
\begin{document}
	\title{Conceal Truth while Show Fake: T/F Frequency Multiplexing based Anti-Intercepting Transmission}
	\author{Zhisheng~Yin,~\IEEEmembership{Member~IEEE,}
		Nan~Cheng,~\IEEEmembership{Senior Member~IEEE}, 
		Mingjie~Wang,
		Changle~Li,~\IEEEmembership{Senior Member~IEEE},
		Wei~Xiang,~\IEEEmembership{Senior Member~IEEE}
		\thanks{
This work was supported in part by the National Natural Science Foundation of China under Grant 62201432, Grant 62071356, and Grant 62101429.

Z. Yin, N. Cheng and C. Li are with State Key Lab. of ISN and School of Telecommunications Engineering, Xidian University, Xi'an 710071, China (e-mail: zsyin@xidian.edu.cn; dr.nan.cheng@ieee.org; clli@mail.xidian.edu.cn).

M. Wang is with Academy for Network \& Communications of China Electronics Technology Group Corporation (CETC), Shijiazhuang 050081, China (e-mail: dtwmingjie@163.com).

W. Xiang is with the School of Engineering and Mathematical Sciences, La
Trobe University, Victoria 3086, Australia (e-mail: W.Xiang@latrobe.edu.au)

.

\textit{Corresponding authors}: Nan Cheng.

	}}%
	\maketitle
	
	\IEEEpeerreviewmaketitle
\begin{abstract}
In wireless communication adversarial scenarios, signals are easily intercepted by non-cooperative parties, exposing the transmission of confidential information. 
This paper proposes a true-and-false (T/F) frequency multiplexing based anti-intercepting transmission scheme capable of concealing truth while showing fake (CTSF), integrating both offensive and defensive strategies. Specifically, through multi-source cooperation, true and false signals are transmitted over multiple frequency bands using non-orthogonal frequency division multiplexing. The decoy signals are used to deceive non-cooperative eavesdropper, while the true signals are hidden to counter interception threats. Definitions for the interception and deception probabilities are provided, and the mechanism of CTSF is discussed.
To improve the secrecy performance of true signals while ensuring decoy signals achieve their deceptive purpose, we model the problem as maximizing the sum secrecy rate of true signals, with constraint on the decoy effect. Furthermore, we propose a bi-stage alternating dual-domain optimization approach for joint optimization of both power allocation and correlation coefficients among multiple sources, and a Newton's method is proposed for fitting the T/F frequency multiplexing factor. In addition, simulation results verify the efficiency of anti-intercepting performance of our proposed CTSF scheme.
\end{abstract}
\begin{IEEEkeywords}
		Conceal truth while show fake, T/F frequency multiplexing, secrecy, deception, anti-interception.
\end{IEEEkeywords}
\section{Introduction}
In the realm of wireless communications, the security of transmitted information is of paramount importance, particularly in scenarios where communication signals are susceptible to interception by non-cooperative entities \cite{Ahuja2021}. In military applications, secure communication is essential to prevent adversaries from intercepting strategic information \cite{Abughalwa2020}. In industrial contexts, it is critical for safeguarding sensitive data against espionage. For civilian applications, it ensures the protection of personal data and privacy \cite{Guri2023}. As technologies such as the Internet of Things (IoT) and smart cities continue to advance, robust security measures are indispensable to mitigate potential threats and vulnerabilities in these non-cooperative environments \cite{Du2023}. Addressing passive threats such as interception and eavesdropping, and conducting research on secure transmission in electronic reconnaissance and electromagnetic adversarial scenarios, represents a significant challenge in both military and industrial protection domains\cite{Wang2021,Roy2022}. Moreover, this is an urgent area for breakthroughs in academic research.

Unlike upper-layer encryption algorithms and security protocols, which rely on key distribution and computational complexity, electromagnetic space security focuses on electromagnetic countermeasures to ensure secure transmission over wireless channels \cite{Venkatesh2021,Sahin2022}. It primarily addresses threats such as electronic interference, signal eavesdropping, deception, electromagnetic shielding, and spectrum theft. To resist interference, interception, and deception, electromagnetic security employs dense overlapping of wave propagation across spatial, temporal, frequency, and energy domains for comprehensive control, protection, and management \cite{Yan2023,Nguyen2021}.
Techniques like frequency hopping, spread spectrum, and beamforming complicate interception and eavesdropping \cite{Takeshita2024,Yin2023}. 
Additionally, physical layer security measures, including jamming-resistant protocols, artificial noise, and secure waveform design, enhance the robustness of wireless communications against passive threats \cite{Zhang2021,Yin2022}.

 \begin{figure}[t]
	\centering
	\includegraphics[width=0.45\textwidth]{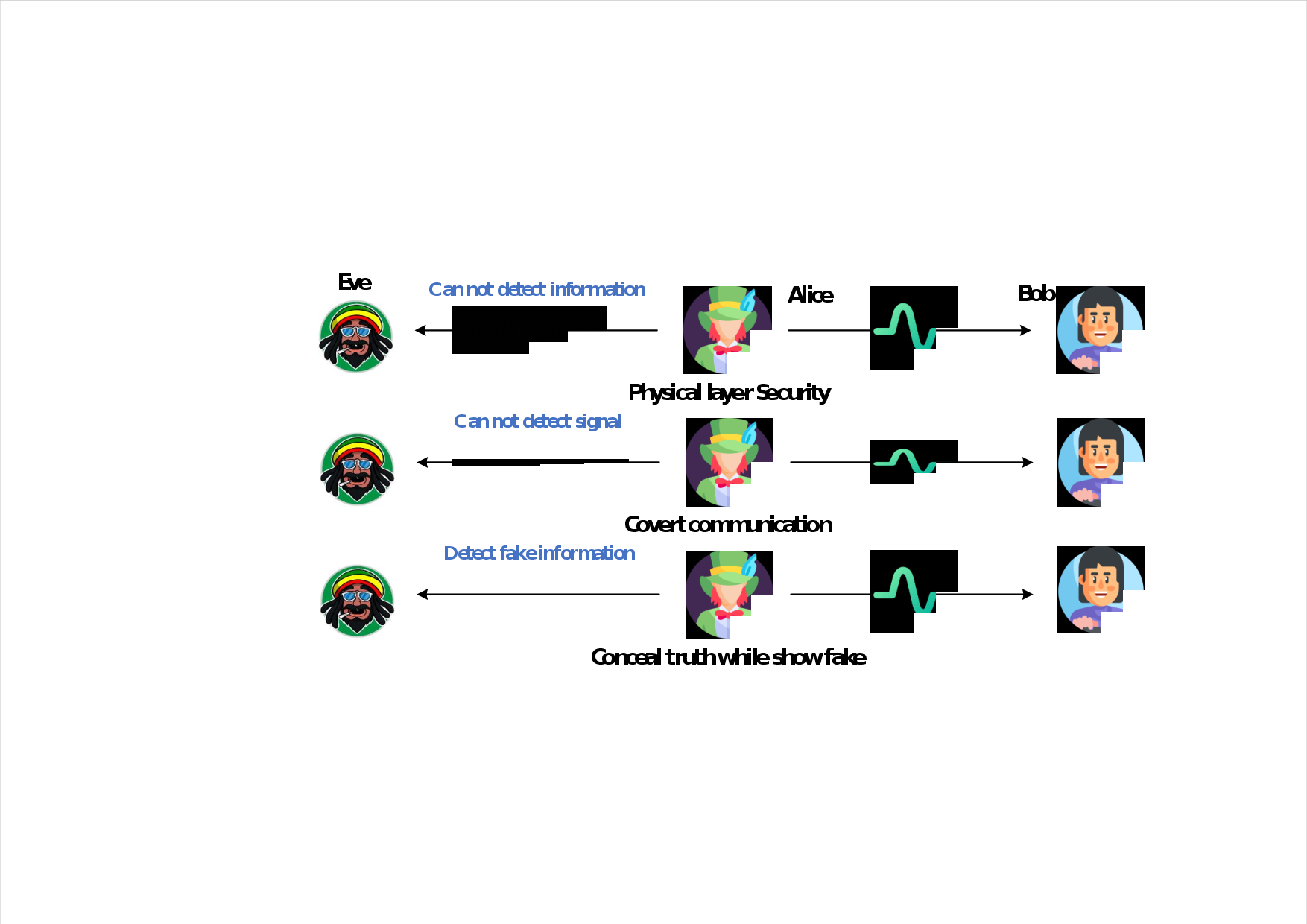}
	\caption {Anti-intercepting/secure transmission schemes}
	\label{fig01}
\end{figure}

Physical layer security addresses passive eavesdropping threats by utilizing the random differences between wireless channels through signal processing methods to achieve secure transmission. Its theoretical performance metrics often include secrecy capacity, secrecy rate, and secrecy outage probability \cite{Yin2023,Yin2022,Liu2023}, etc. However, most of the current work primarily focuses on how to enhance secrecy capacity, using techniques such as resource optimization, artificial noise, and relay cooperation, along with extensive theoretical analysis. In contrast, there is still a lack of substantial work on how to achieve the attainable secrecy capacity through secure coding or other methods, which remains significant challenges.
Additionally, covert communication aims to further conceal the communication signal from being detected by adversaries, such as by using low power spectral density transmission schemes to hide the signal within the ambient background noise \cite{Li2024,Jiang2021,Zhang2021a}. 
However, due to its power-constrained nature, the coverage capability, transmission rate, and reliability of the signal are susceptible to degradation, especially in the presence of interference. Therefore, achieving anti-interception/secure transmission in wireless communication adversarial scenarios still presents significant challenges.
 In addition, considering a strongly adversarial communication environment, such as military unmanned reconnaissance communication systems, it is essential to ensure that one's own communication links possess both anti-interception capabilities for secure communication and the ability to counter adversarial strikes \cite{Li2023}. In the context of IoT secure communication, adversarial attacks and defense requirements in 6G and AI-assisted IoT systems have also attracted significant attention \cite{Son2024}. Therefore, offensive and defensive collaborative anti-interception communication methods are required.

By investigating and analysis on the related work of the aforementioned studies, it is evident that anti-interception transmission schemes are highly targeted. Different approaches have their own advantages and disadvantages in various adversarial scenarios and purposes. Physical layer security methods can ensure a high secure transmission capacity but often overlook stealth, focusing on the security of the information layer. Covert communication schemes emphasize the anti-detection capability of the communication behavior itself, offering high covertness but often having limited transmission rates. Inspired by existing work, we focus on the ultimate goal of achieving victory in adversarial scenarios and aim to develop a novel anti-interception transmission scheme.

The comparison of anti-intercepting transmission schemes is illustrated in Fig. \ref{fig01}. Different from existing secure transmission schemes that are studied solely from a defensive perspective, this paper first proposes an integrated offensive and defensive anti-interception transmission scheme: conceal truth while show fake (CTSF). 
Particularly, our proposed scheme, termed True-and-Fake (T/F) frequency multiplexing, employs non-orthogonal frequency division multiplexing to transmit both true and fake signals across multiple frequency bands, where true signals are transmitted on ``true frequencies" and deceptive fake signals are transmitted on ``fake frequencies".  
We mask the true signals within the fake signals by T/F frequency multiplexing to conceal the true information, while simultaneously using the fake signals to deceive non-cooperative eavesdroppers. 
The core challenge addressed in this paper is to maximize the secure transmission rate of the true signals while ensuring that the decoy signals effectively serve their purpose of deception. 
To realize this, main contributions of our work are as follows:
\begin{itemize}
	\item  The T/F frequency multiplexing-based CTSF model is proposed as an optimization task aimed at maximizing the sum secrecy rate of cooperative sources, subject to constraints on the reception quality of the decoy signals. This ensures that the decoy signals effectively deceive potential interceptors while maintaining the secrecy of true signals.
	\item To address the non-convex and intractable problem, we propose a bi-stage alternating dual-domain optimization approach. This approach jointly optimizes the multi-source power allocation and the T/F frequency correlation coefficients, balancing the trade-off between the secrecy performance and the deception effectiveness. A Newton's method is also provided for fitting the frequency multiplexing parameter using such correlation coefficients. In addition, the complexity of the proposed algorithms are analyzed.
	\item Insightful discussions are conducted through theoretical analysis. To verify the anti-interception performance of our proposed CTSF scheme, extensive simulations are carried out. Both anti-intercepting performance and deceptive performance are evaluated, confirming the efficiency of our proposed CTSF scheme.  
\end{itemize}

\section{Related Work}
With the advancement of physical layer security research in recent years, extensive studies have addressed secure transmission issues across various application scenarios, including networks in ground, sea, air, and space domains, underscoring the critical importance of secure transmission. Different scenarios present unique security implementation challenges. For instance, in satellite communications, the wide coverage characteristic often results in similar channel conditions for both cooperative and non-cooperative users within the same beam. This similarity poses challenges for physical layer security schemes that rely on channel randomness differences. In different electromagnetic or network environments, the resources available for signal processing can vary, leading to differences in secure transmission methods. Specifically, we will examine the following typical scenarios.

To achieve secure transmission in satellite communications, considering the high mobility and long propagation characteristics of low Earth orbit (LEO) satellites, an uncertainty model for the satellite channel angle and a norm-bounded model for the eavesdropper were established, and a robust transmission design was proposed \cite{Jiang2023}. Achieving secure transmission by exploiting the asymmetrical characteristics of channel state information through beamforming and relay\cite{Lyu2023,Yin2023a}. Secure transmission problems are frequently modeled as the optimization of beamforming or relay selection to maximize the secrecy rate.
In wireless communications, various interferences such as inter-beam interference, inter-user interference, and inter-symbol interference often pose challenges to reliable transmission. However, these interferences can be strategically exploited to mitigate eavesdropping threats and enhance secure transmission design \cite{ Albayrak2023,Wang2024}. In addition, there is the artificial noise method, where artificial noise is actively transmitted in the null space of the main channel to degrade the eavesdropper's signal reception quality \cite{Zhou2023,Roth2024}. Particularly, Tampere University conducted an experiment on the design of a full-duplex jamming transceiver \cite{Marin2023}. This transceiver transmits frequency modulated continuous wave (FMCW) signals, commonly seen in low-cost radar systems, to prevent eavesdroppers from correctly interpreting wireless local area network (WLAN) signals while still being able to receive the same signals. This jammer uses the transmitted swept waveform in the down-converter, causing self-interference from antenna coupling and nearby reflections to degrade into a fixed low-frequency tone. To implement secure transmission, artificial noise is used to generate physical layer security keys to counter eavesdropping attacks \cite{Wei2023}. Key distribution is employed for signal modulation, allowing both keys and data to be transmitted over a single channel, thereby reducing costs and ensuring better compatibility.

Unlike traditional physical layer security methods, covert communications focus more on the anti-detection capability of the communication signal. Common methods include power control, interference injection, and relay selection \cite{Jiang2024}. Covert communication is a technique that hides signals within environmental noise, making it difficult for an eavesdropper to detect the presence of communication \cite{Chen2023}.
Using Kullback-Leibler divergence and total variation distance as metrics for covertness, low detection probability communication is achieved by optimizing the transmit power and interference power. The covert rate under massive multiple input multiple output (MIMO) block fading channels is analyzed \cite{Bai2023}. By utilizing spectrum multiplexing to hide device to device (D2D) signals within cellular network signals and the adversary's detection error minimization problem is conducted \cite{Feng2024}. The methods of relay selection and interference injection are also applicable in covert communications \cite{Forouzesh2020,Su2021}.
To maximize the covert transmission rate while accounting for the adversary target's maneuvering altitude, a method is proposed that jointly designs the radar waveform and communication transmit beamforming vector based on two channel state information (CSI) models, effectively turning interference into an ally \cite{Wang2024a}. Lower and upper bounds on the information-theoretically optimal throughput are derived as functions of the channel parameters, the desired level of covertness, and the amount of shared key available \cite{Zhang2021b}.

There are also some studies that consider both secrecy capacity and covertness. To simultaneously counter two types of adversaries, one that captures data and another that detects communication, a joint method of time slot selection and interference injection is proposed to maximize the secrecy capacity under covertness constraints \cite{Forouzesh2023}. Additionally, by employing a beamforming method based on random artificial noise (AN), the covert rate of NOMA users can be maximized while ensuring secure performance constraints \cite{Li2023}. Based on the aforementioned research summary, existing secure transmission methods such as physical layer security and covert communication ensure secure transmission by focusing on different aspects of information security and anti-detection, primarily from the perspective of passive defense, enabling communication systems to resist interception.

The remainder of this paper is structured as follows: Section II provides a comprehensive overview of related work in the field of secure communications. Section III describes the proposed system model for T/F frequency multiplexing-based multi-source cooperative anti-intercepting transmission. Section IV presents the solution and discusses its implications. Section V details the simulations and presents the numerical results. Finally, Section VI concludes the paper.


\emph{Notations:} $ \left|  \cdot  \right| $ represents the absolute value operation.
  $\mathcal{CN}\left( \mu, \delta^2\right) $ denotes the complex Gaussian distribution with mean $\mu$ and variance $\delta^2$. $ \Pr \left(  \cdot  \right) $ represents the probability calculation. Other notations are summarized in Table I.     

\begin{table}[h]
	\centering
	\caption{Summary of Notations and Definitions}
	\label{tab11}
	\begin{tabular}{lll}
		\toprule
		\midrule
		Notation  & Definition \\
		\midrule
		T/F & True/Fake \\
		${\cal K}$            &  the set for true frequency for transmitting true signals\\
		$\mathcal{\tilde K} $ & the set for fake frequency for transmitting decoy signals\\
		$P_s$ & Total transmission power\\
		$ { x_i},  i \in \mathcal{ K} $ & true signal with confidential information\\
		$ {\tilde x_i}, i \in \mathcal{\tilde K} $ & decoy signal for confusing the Eve\\
		$ {h_i}$ & Channel response from the $i^{th}$ user to Bob \\
	$ {h_{e,n}}$ & Channel response from the $n^{th}$ user to Eve \\
		$ K $ & The number of sources \\ 
		$ {\gamma _k} $ & The SINR received at Bob for detecting true signal \\ 
		$ {\gamma _{e,k}} $ & The SINR received at Eve for intercepting the true signal \\ 
		$ {{{\tilde \gamma }_{e,n}}}$ & The SINR of deception received at Eve\\
		$ {{\tilde T}_h} $ & A certain threshold \\
		$\alpha$& frequency multiplexing factor\\
		$ c_i $ & The correlation coefficient between signals of sources\\
		\bottomrule
	\end{tabular}
\end{table}

\section{System Model}
 \begin{figure}[ht]
	\centering
	\includegraphics[width=0.48\textwidth]{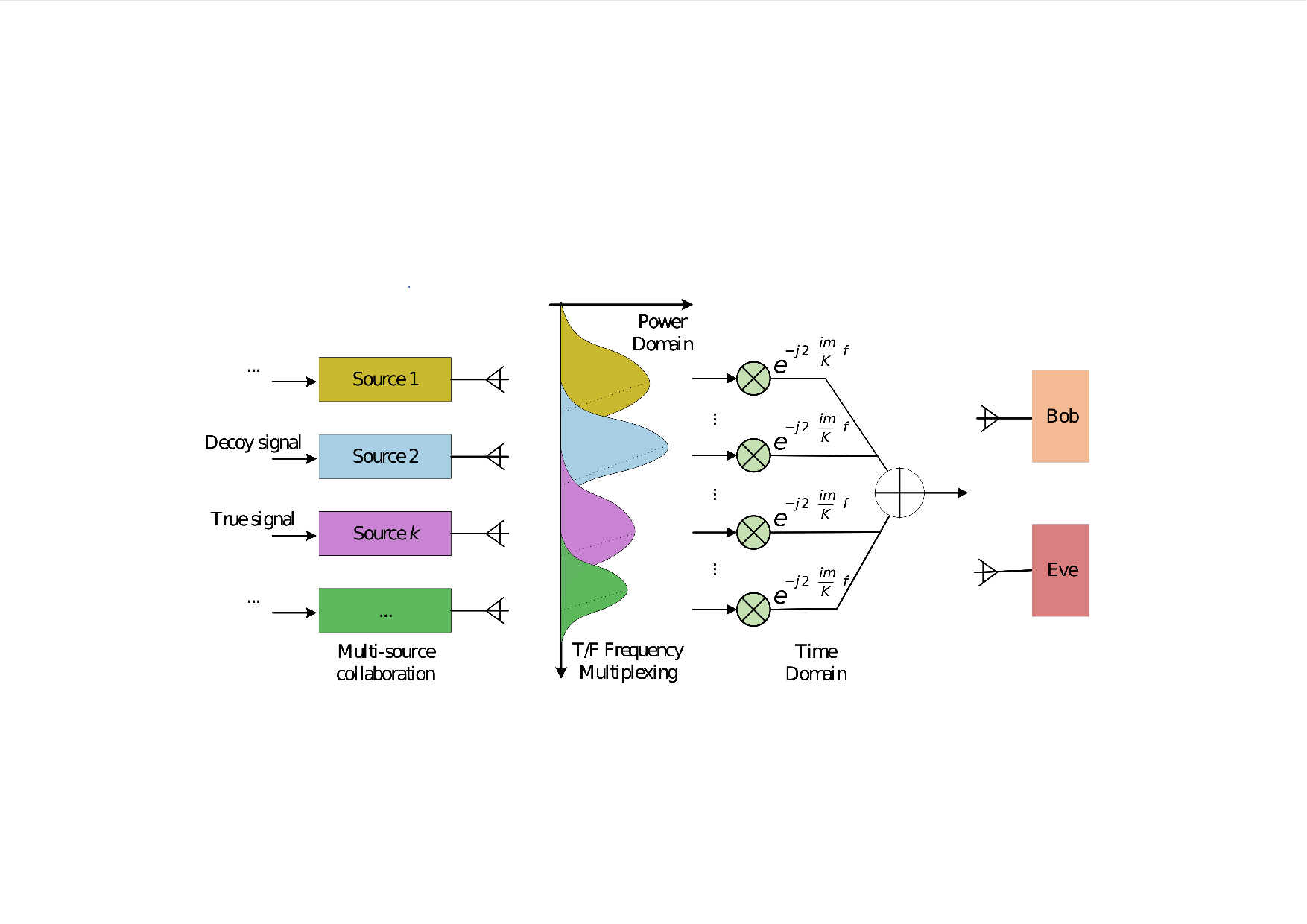}
	\caption {{T/F frequency multiplexing based CTSF anti-intercepting transmission}}
	\label{fig1}
\end{figure}	

In this work, we investigate a multi-source cooperative anti-interception transmission system, where multiple sources jointly and synchronously transmit true and fake information using different frequency bands, as depicted in Fig. \ref{fig1}. 
The proposed system employs a non-orthogonal frequency division scheme for cooperative signal transmission, fundamentally differing from conventional orthogonal frequency division multiplexing (OFDM) systems that maintain strict inter-signal orthogonality. In our multi-source collaborative anti-interception framework utilizing T/F frequency multiplexing, the $ K $ available frequency bands can be strategically allocated such that $ M $ bands carry the true signal while the remaining $ K-M $ bands transmit decoy signals, following either an interleaved distribution pattern or a security-optimized allocation strategy that dynamically balances anti-interception requirements, decoy deception effectiveness, and maintainable transmission rates. The intentional non-orthogonal band allocation creates controlled frequency overlapping and inter-signal interference, resulting in deep time-domain mixing of true and decoy signals that obscures their distinguishability.

In this idea, the true signal is hidden within the fake signals, while the fake signals are used to decoy non-cooperative Eves, thereby achieving the goal of conceal truth while show fake. Without loss of generality, it is assumed that Eve has the same working model as Bob and is capable of eavesdropping on the signals of all sources. In addition, the channel model adopts a Rician fading model in this work.

The signal received by Bob in the $ k^{th} $ frequency band is represented as:

\begin{align}\label{eq1}
		{y_k} &= \frac{1}{K}\sum\limits_{m = 0}^{K - 1} {\sum\limits_{i = 0}^{K - 1} {\sqrt {{p_i}} {h_i}{x_i}{e^{j2\pi \frac{{(i - k)m}}{K}\alpha }}} }  + {n_k}\nonumber\\
		&= \sqrt {{p_k}} {h_k}{x_k} + \frac{1}{K}\sum\limits_{m = 0}^{K - 1} {\sum\limits_{i = 0,i \ne k}^{K - 1} {\sqrt {{p_i}} {h_i}{x_i}{e^{j2\pi \frac{{(i - k)m}}{K}\alpha }}} }  + {n_k}\nonumber\\
		&= \sqrt {{p_k}} {h_k}{x_k} + \underbrace {\frac{1}{K}\sum\limits_{m = 0}^{K - 1} {\sum\limits_{i \ne k}^{\cal K} {\sqrt {{p_i}} {h_i}{x_i}{e^{j2\pi \frac{{(i - k)m}}{K}\alpha }}} } }_{true} \nonumber\\
		& \quad + \underbrace {\frac{1}{K}\sum\limits_{m = 0}^{K - 1} {\sum\limits_{i \ne k}^{\tilde {\cal K}} {\sqrt {{p_i}} {h_i}{{\tilde x}_i}{e^{j2\pi \frac{{(i - k)m}}{K}\alpha }}} } }_{fake} + {n_k}, k \in {\cal K},
\end{align}
where $ { x_i},i \in \mathcal{ K} $ denotes the true transmit signals and  $ {\tilde x_i},i \in \mathcal{\tilde K} $ are specified fake decoy signals, where $\mathcal{ K} $ and $\mathcal{\tilde K} $ represent the sets for true and fake frequency deployment, respectively, $\alpha$ denotes the T/F frequency multiplexing factor,
$ {h_i} \sim {\cal{CN}} \left( {{u_{{h_i}}}, \sigma _i^2} \right) $ is the channel response from the $i^{th}$ user to Bob with mean $ {{u_{{h_i}}}} $ and variance ${\delta _i^2}$, and $ {n_k} $ is the received nose from the $i^{th}$ frequency band with power $ \delta _k^2$.

Since the decoy signals are deliberately set for a specific purpose, we assume that Bob is aware of them and can eliminate the interference caused by these signals.
Therefore, the $ {y_k} $ at Bob can be rewritten as:
\begin{equation}\label{eq2}
{y_k} = \sqrt {{p_k}} {h_k}{x_k} + \sum\limits_{m = 0}^{K - 1} {\sum\limits_{i \ne k}^{ {\cal K}} {\frac{{\sqrt {{p_i}} }}{K}{h_i}{x_i}{e^{j2\pi \frac{{(i - k)m}}{K}\alpha }}} }  + {n_k},k \in {\cal K}.
\end{equation}

Based on (\ref{eq2}), the received signal-to-interference-plus-noise ratio (SINR) for detecting the true signal can be calculated as
\begin{equation}\label{eq3}
	{\gamma _k} = \frac{{{p_k}{{\left| {{h_k}} \right|}^2}}}{{\sum\limits_{i = 0,i \ne k,i \in \mathcal{ K}}^{K - 1} {{p_i}{{\left| {{h_i}} \right|}^2}{c_i}}  + \delta _k^2}}, k \in {\cal K},
\end{equation}
where $ c_i $ represents the correlation coefficient between signals in both the $ i^{th} $ and the $ k^{th} $ frequency band which can be calculated as
\begin{equation}\label{eqci}
	 {c_i} = {\left| {\frac{{{\rm{sinc}}\left( {\alpha \left( {i - k} \right)} \right)}}{{{\rm{sinc}}\left( {\alpha \left( {i - k} \right){\rm{ }}/K} \right)}}} \right|^2}.
\end{equation}

At the intercepting end, to achieve the purpose of deception, we expect the Eve to receive the decoy signal, represented as
\begin{equation}\label{eq4}
{y_{e,n}} = \sqrt {{p_n}} {h_{e,n}}{{\tilde x}_n} + \frac{1}{K}\sum\limits_{m = 0}^{K - 1} {\sum\limits_{i \ne n}^{{\cal K} \cup \tilde {\cal K}} {\sqrt {{p_i}} {h_{e,i}}{x_i}{e^{\frac{{j2\pi (i - n)m}}{K}\alpha }}} }  + {n_{e,n}}.
\end{equation}

Deriving from (\ref{eq4}), the intercepting SINR at Eve from the $ k^{th} $ frequency band can be represented as 
\begin{equation}\label{eq5}
{\gamma _{e,k}} = \frac{{{p_k}{{\left| {{h_{e,k}}} \right|}^2}}}{{\sum\limits_{i = 0,i \ne k}^{K - 1} {{p_i}{{\left| {{h_{e,i}}} \right|}^2}{c_i}}  + \delta _{e,k}^2}}, k \in {\cal K},
\end{equation}
and the SINR of deception at Eve from the $ n^{th} $ frequency band is also represented as 
\begin{equation}\label{inSINR}
{{{\tilde \gamma }_{e,n}}} = \frac{{{p_n}{{\left| {{h_{e,n}}} \right|}^2}}}{{\sum\limits_{j = 0,i \ne n}^{K - 1} {{p_j}{{\left| {{h_{e,j}}} \right|}^2}{c_i}}  + \delta _{e,n}^2}},n \in \tilde {\cal K}.
\end{equation}

\begin{remark}\label{R1}
	To achieve the goal of deception, it is necessary to ensure that the decoy signal serves as the primary demodulation signal at the non-cooperative receiver. This requires satisfying the condition ${{{\tilde \gamma }_{e,n}}} \ge {\gamma _{e,k}}, n \in \tilde {\cal K}$ and $k \in {\cal K}$. 
	Specifically, let \( f_i = p_i \left| h_{e,i} \right|^2 \) and \( \varphi = \sum_{i=0}^{K-1} p_i \left| h_{e,i} \right|^2 c_i + 1 \). Then we have
	$  {{{\tilde \gamma }_{e,n}}} = \frac{f_n}{\varphi - f_n}, \quad {\gamma _{e,k}} = \frac{f_k}{\varphi - f_k} $, 
	where \(\frac{x}{\varphi - x}\) is a monotonically increasing function. Consequently, we obtain
	 \begin{equation}\label{eq7}
	 	p_n \left| h_{e,n} \right|^2 \ge p_k \left| h_{e,k} \right|^2,
	 \end{equation}
which indicates the strength of decoy signal received by Eve should be greater than that of the true signal it attempts to intercept.
\end{remark}

Based on Remark 1, the interception probability can be defined by 
\begin{equation}\label{eq7}
	\Pr \left( { \gamma _{e,k}} \ge {{\tilde T}_h} \right),
\end{equation}
which means the probability that the SINR of true signal intercepted by Eve exceeds a certain threshold $ {{\tilde T}_h} $. 
$ {{\tilde T}_h} $ ensures the deceptive quality constraint. $ \Pr \left(  \cdot  \right) $ represents the probability calculation.
In this case, we assume that Eve can obtain information from intercepted true signals.
In the other case, the probability that Eve being deceived is defined as 
 \begin{equation}\label{eq8}
\Pr \left( {{{\tilde \gamma }_{e,n}} \ge {{\tilde T}_h},{p_n}{{\left| {{h_{e,n}}} \right|}^2} \ge {p_k}{{\left| {{h_{e,k}}} \right|}^2}} \right),
 \end{equation} 
which indicates that under the condition where the decoy signals arrived at Eve dominate, the SINR of the decoy signals received by Eve must exceed the certain threshold.

Comparing (4) and (6), we can make the following indications: multi-source cooperation can eliminate the interference caused at Bob by a prescient information of fake signals, resulting in the SINR being reduced only by the noise and co-channel interference from other true signals at different frequency bands. Whereas the SINR received at Eve will be affected by the interference from all signals transmitted from full spectrum.

To realize the anti-interception transmission with achieving the purpose of CTSF, we design a transmission that maximizes the secrecy rate to counteract the interception probability, thereby achieving the goal of concealing the truth. Meanwhile, we ensure the signal quality of the decoy signals received by Eve to enhance the deception probability, thereby achieving the goal of showing the fake. To this end, we formulate an optimization problem as follows
\begin{subequations}\label{eq9}
\begin{align}
		\mathcal{P}1: \max_{\{p_i\},\alpha}& \quad \sum\limits_{k \in \mathcal{ K}} {{{\log }_2}\left( {1 + {\gamma _k}} \right) - {{\log }_2}\left( {1 + {\gamma _{e,k}}} \right)}  \label{9a} \\
	\text{s.t.:}& \quad {{{\tilde \gamma }_{e,n}} \ge {{\tilde T}_h}},n \in \mathcal{\tilde K} \label{9b}, \\
	& \quad p_n \left| h_{e,n} \right|^2 \ge p_k \left| h_{e,k} \right|^2, n \in \mathcal{\tilde K}, k \in {\cal K}, \label{9c}\\
	&  \quad 0 \le \sum\limits_{i = 0}^{K - 1} {{p_i}}  \le {P_s}, \label{9d}
\end{align}
\end{subequations}
where (\ref{9a}) serves as the objective function to maximize the sum secrecy rate of true signals, employing an information-theoretic security metric that embodies the ``conceal truth" principle in our CTSF framework; the constraint (\ref{9b}) guarantees effective deception of Eve by maintaining the deception metric above a predefined threshold $ {\tilde T}_h $, corresponding to the ``show fake" requirement;
 (\ref{9c}) ensures the received decoy signals dominate at Eve’s side (based on (\ref{eq8}), it is a necessary condition for successful deception.); and the constraint (\ref{9d}) is to limit the maximum transmission power for transmitting such true and fake signals.

\section{Solution and Discussion}
This section mainly presents the solution method designed for the aforementioned problem and provides some in-depth discussions. Specifically, we first perform a series of mathematical transformations, variable substitutions, and simplifications for the aforementioned non-convex optimization problem. Then, we propose a two-stage alternating iterative optimization method, and finally, we propose a single-variable nonlinear optimization method based on a Newton's method to fit the frequency reuse parameters. Additionally, we provide some theorems to discuss some indications.

\begin{figure*}
	\begin{align}\label{eqob}
		{R_s} 
		&= \left| {\cal K} \right|{\log _2}\big( {\sum\limits_{i = 0,i \in {\cal K}}^{K - 1} {{\xi _i}{{\left| {{h_i}} \right|}^2}}  + 1} \big) + \sum\limits_{k \in {\cal K}} {{{\log }_2}\big( {\sum\limits_{i = 0,i \ne k}^{K - 1} {{\xi _i}{{\left| {{h_{e,i}}} \right|}^2}}  + 1} \big)}  - \left| {\cal K} \right|{\log _2}\big( {\sum\limits_{i = 0}^{K - 1} {{\xi _i}{{\left| {{h_{e,i}}} \right|}^2}}  + 1} \big) \nonumber\\
		&- \sum\limits_{k \in {\cal K}} {{{\log }_2}\big( {\sum\limits_{i = 0,i \ne k,i \in {\cal K}}^{K - 1} {{\xi _i}{{\left| {{h_i}} \right|}^2}}  + 1} \big)} \nonumber\\
		&= \left| {\cal K} \right|{\log _2}\big( {\tau \sum\limits_{i = 0,i \in {\cal K}}^{K - 1} {{\xi _i}{{\left| {{h_i}} \right|}^2}}  + \tau } \big) + \sum\limits_{k \in {\cal K}} {{{\log }_2}\big( {{\mu _k}\sum\limits_{i = 0,i \ne k}^{K - 1} {{\xi _i}{{\left| {{h_{e,i}}} \right|}^2}}  + {\mu _k}} \big)}.
	\end{align}
	\hrulefill 
\end{figure*}

By using (\ref{eq3}), (\ref{eq5}) and (\ref{inSINR}), the objective function in (\ref{9a}) can be simplified in (\ref{eqob}) as shown at the top of this page.
In (\ref{eqob}), $ \left| {\cal K} \right| $ denotes the length of set $ {\cal K} $, and
the variables substitution are made by
\begin{equation}\label{RV1}
\tau  = {1 \mathord{\left/
		{\vphantom {1 {\left( {\sum\limits_{i = 0}^{K - 1} {{\xi _i}{{\left| {{h_{e,i}}} \right|}^2}}  + 1} \right)}}} \right.
		\kern-\nulldelimiterspace} {\bigg( {\sum\limits_{i = 0}^{K - 1} {{\xi _i}{{\left| {{h_{e,i}}} \right|}^2}}  + 1} \bigg)}},	
\end{equation}
and 
\begin{equation}\label{RV2}
{\mu _k} = {1 \mathord{\left/
		{\vphantom {1 {\left( {\sum\limits_{i \ne k}^{\cal K} {{\xi _i}{{\left| {{h_i}} \right|}^2}}  + 1} \right)}}} \right.
		\kern-\nulldelimiterspace} {\bigg( {\sum\limits_{i \ne k}^{\cal K} {{\xi _i}{{\left| {{h_i}} \right|}^2}}  + 1} \bigg)}}, k\in \cal{K}.
\end{equation}



By replacing {\color{blue}${\xi _i} = {p_i}{c_i}$ $ \forall i$}, the problem $ \mathcal{P}1 $ can then be reformulated as
\begin{subequations}\label{eq16}
	\begin{align}
	\mathcal{P}2:& \quad \max_{{\xi_i}, \tau, \mu_k} {R_s} \label{16a} \\
	\text{s.t.:} &  \quad  {{{\tilde T}_h}}\sum\limits_{i \ne n}^{{\cal K} \cup \tilde {\cal K}} {{\xi _i}{{\left| {{h_{e,i}}} \right|}^2}}  + {{{\tilde T}_h}} \le {\xi _n}{\left| {{h_{e,n}}} \right|^2},n \in \tilde {\cal K}, \label{16b} \\
	& \tau \sum\limits_{i = 0}^{K - 1} {{\xi _i}{{\left| {{h_{e,i}}} \right|}^2}}  + \tau  - 1 \le 0, \label{16c}\\
	& {\mu _k}\sum\limits_{i = 0,i \ne k,i \in K}^{K - 1} {{\xi _i}{{\left| {{h_i}} \right|}^2}}  + {\mu _k} - 1 \le 0, k \in {\cal K}, \label{16d}\\
	& {\xi _n} \ge {\xi _k}, n \in \mathcal{\tilde K}, k \in {\cal K}, \label{16e} \\
	& 0 \le \sum\limits_{i = 0}^{K - 1} {{\xi _i}}  \le {P_s}. \label{16f}
	\end{align}
\end{subequations}

Due to the presence of the terms \(\xi_i\) and \(\tau, \mu_k\) being multiplied together in (\ref{16a}), (\ref{16c}), and (\ref{16d}), the problem P2 remains non-convex. However, it conforms to a mathematical structure of multi-variable bi-convex optimization. Therefore, we propose an alternating iterative framework to solve this problem. Particularly, a bi-stage convex optimization framework for \(\xi_i\) and $ (\tau, \mu_k) $ is conducted.  

In the first stage, given an initial $ \left\{ {\xi _i^ \circ } \right\} $, we find solutions of introduced variables, i.e., $ \tau$ and $\mu_k\ $. In this case, the problem $ \mathcal{P}2 $ is reformulated as
\begin{subequations}\label{eq10}
	\begin{align}
		\mathcal{T}1:& \quad \max_{\tau, \mu_k} \left| \mathcal{K} \right|\ln \left( {a\tau } \right) + \ln \big( {\sum\limits_{k \in \mathcal{K}} {\ln \left( {b{\mu _k}} \right)} } \big) \label{10} \\
		\text{s.t.:} &  \quad  (\ref{16c}),(\ref{16d}), \label{10a} 
	\end{align}
\end{subequations}
where $a = \sum\limits_{i \in \mathcal{K}} {\xi _i^ \circ {{\left| {{h_i}} \right|}^2}}  + 1$, $ b =\sum\limits_{i \ne k}^{\cal K} {\xi _i^\circ {{\left| {{h_{e,i}}} \right|}^2}}  + 1$, and $ \left\{ {\xi _i^ \circ } \right\} $ denotes a feasible solution of $ \left\{ {\xi _i } \right\} $ which can be initialized at the beginning of algorithm.

For the second stage, substituting the obtained $\tau ^ \circ$ and $\mu _k^ \circ $ by solving $\mathcal{T}1  $, the problem $ \mathcal{P}2 $ is then converted to 
\begin{subequations}\label{eq18}
	\begin{align}
		\mathcal{T}2:& \quad \max_{{\xi_i}} {R_s}\left( {{\xi _i},{\tau ^ \circ },\mu _k^ \circ } \right) \label{18a} \\
		\text{s.t.:} &  \quad  (\ref{16b}-\ref{16e}). \label{18b}
	\end{align}
\end{subequations}

After the above reformulations and simplifications, the two sub-problems $ \mathcal{T}1 $ and $ \mathcal{T}2 $ have been converted into convex problems, which can be solved directly. We use the MOSEK solver to solve them separately and perform alternating iterations until convergence. As shown in Algorithm 1, stage 1 is for solving $ \mathcal{T}1 $ and stage 2 is for solving $ \mathcal{T}2 $.
The specific algorithm execution process is as follows. Each iteration of the alternating method has a complexity of \( O(2K^3)\), where \(K\) is the number of sources. Let \(t\) be the number of iterations required for convergence. The total complexity is \(O(2t \cdot K^3)\).
\begin{algorithm}[h]
	\caption{Bi-Stage Alternating Dual-Domain Optimization (BADO) Approach}
	\KwIn{Initial values \(\left\{ \xi_i^\circ \right\}\) for the variables \(\left\{ \xi_i \right\}\).}
	\KwOut{\(\left\{ \xi_i^\star \right\}\)--optimal values of \(\left\{ \xi_i \right\}\), \(\tau^\star\)--optimal value of \(\tau\), and 
			\(\left\{ \mu_k^\star \right\}\)--optimal value of \(\left\{\mu_k\right\}\).}
	\Repeat{convergence}{
		\textbf{Stage 1: Solve for \(\tau\) and \(\mu_k\)}\;
	Given the current \(\left\{ \xi_i^\circ \right\}\) (Initial value can be utilized at the procedure startup, with subsequent updates performed using the output from stage 2), solve the problem $ \mathcal{T}1 $ using a convex MOSEK solver\;
	Solved results: \(\tau^\circ\) and \(\mu_k^\circ\)\;
			Update \(\tau\) and \(\mu_k\) to  \(\tau^\circ\) and \( \left\{\mu_k^\circ\right\}\) in problem $ \mathcal{T}2 $\;		
			\textbf{Stage 2: Solve for \(\xi_i\)}\;
			Given the updated \(\tau^\circ\) and \(\mu_k^\circ\), solve the problem $ \mathcal{T}2 $ using a convex MOSEK solver\;
			Solved results: \(\left\{ \xi_i^\circ \right\}\)\;
			Update \(\left\{\xi_i\right\}\) to \(\left\{ \xi_i^\circ \right\}\) and return to Stage 1\;
		}
	\Return \(\left\{ \xi_i^\star \right\}\)  $ \gets $ \(\left\{ \xi_i \right\}\), \(\tau^\star\)  $ \gets $ \(\tau\), and \(\left\{ \mu_k^\star \right\}\) $  \gets $ \(\left\{ \mu_k \right\}\).\
	\end{algorithm}

Keep (\ref{eqci}) in mind, we further to fit a $ \alpha $ based on the value of $ {c_i} (i = 0,..., K-1) $. Newton's method is an iterative technique for solving nonlinear equations or optimization problems. It leverages the first and second derivatives of the objective function to quickly converge to the function's extremum. For single-parameter optimization problems, the core idea of Newton's method is to approximate the objective function using a Taylor series expansion and iteratively update the parameter to find the optimal solution. Particularly, consider the objective function we aim to optimize
\begin{equation}\label{eq16}
	f(\alpha) = \sum_{i=0}^{K-1} \big( c_i - \big( \frac{\text{sinc}(\alpha (i - k))}{\text{sinc}\left(\frac{\alpha (i - k)}{K}\right)} \big)^2 \big)^2,
\end{equation}
where $ \{c_i\} $ are known data, and \( K \) and \( k \) are constants. We seek the parameter \( \alpha \) that minimizes \( f(\alpha) \).
Its gradient is calculated as 
\begin{equation}\label{eq17}
f'(\alpha) = \frac{d}{d\alpha} \sum_{i=0}^{K-1} \big( c_i - \big( \frac{\text{sinc}(\alpha (i - k))}{\text{sinc}\big(\frac{\alpha (i - k)}{K}\big)} \big)^2 \big)^2.	
\end{equation}

Its Hessian is calculated as 
\begin{equation}\label{eq200}
	f''(\alpha) = \frac{d^2}{d\alpha^2} \sum_{i=0}^{K-1} \big( c_i - \big( \frac{\text{sinc}(\alpha (i - k))}{\text{sinc}\left(\frac{\alpha (i - k)}{K}\right)} \big)^2 \big)^2.
\end{equation}

Then an iterative update is executed by
\begin{equation}\label{eq211}
\alpha_{k+1} = \alpha_k - \frac{f'(\alpha_k)}{f''(\alpha_k)}.	
\end{equation}
If \( |f'(\alpha_k)| \) or \( |\alpha_{k+1} - \alpha_k| \) is less than the predefined threshold, stop the iteration and return the optimal parameter \( \alpha \).	
\begin{algorithm}[t]
	\caption{Newton's Method for Single Parameter Optimization}
	\label{alg:newton_method}
	\KwIn{Initial guess $\alpha_0$, tolerance $\epsilon_1$, $\epsilon_2$, maximum iteration number $N$}
	\KwOut{Optimal parameter $\alpha^*$}
	$k \gets 0$\;
	\While{$k < N$}{
		Compute gradient $g_k \gets f'(\alpha_k)$ by using (\ref{eq17})\;
		Compute Hessian $H_k \gets f''(\alpha_k)$ by using (\ref{eq200})\;
		Update parameter $\alpha_{k+1} \gets \alpha_k - \frac{g_k}{H_k}$ by using (\ref{eq211})\;

		\If{$|g_k| < \epsilon_1$ \textbf{or} $|\alpha_{k+1} - \alpha_k| < \epsilon_2$}{
			\textbf{break}\;
		}
		$k \gets k + 1$\;
	}
	$\alpha^* \gets \alpha_k$\;
	\Return $\alpha^*$\;
\end{algorithm}

Let the denominators of \(\gamma_k\) and \(\gamma_{k,e}\) be constants \( C_k \) and \( C_{k,e} \), respectively,
\begin{equation}\label{eq18}
	C_k = \sum\limits_{i = 0, i \ne k, i \in \mathcal{K}}^{K-1} p_i \left| h_i \right|^2 c_i + 1,
\end{equation}
\begin{equation}\label{eq19}
	C_{k,e} = \sum\limits_{i = 0, i \ne k}^{K-1} p_i \left| h_{e,i} \right|^2 c_i + 1.
\end{equation}

Based on the results optimized by Algorithm 1, the values of $ C_k $ and $ C_{k,e} $ in (\ref{eq18}) and (\ref{eq19}), respectively, can be determined. The power allocation for true signals $ {p_k}, k \in \cal{K} $ can be calculated by the following equation set.
\begin{equation}\label{eq28}
	{R_{s,k}} = {\log _2}\left( {\frac{{{C_{k,e}}\left( {{C_k} + {p_k}{\alpha _k}} \right)}}{{{C_k}\left( {{C_{k,e}} + {p_k}{\alpha _{k,e}}} \right)}}} \right), k \in \cal{K}.
\end{equation}
The power allocation for fake signals $ {p_n}$ can be calculated by $ {p_n} =  {{\tilde T}_h}{{\left( {{C_{n,e}} + 1} \right)} \mathord{\left/
		{\vphantom {{\left( {{C_{n,e}} + 1} \right)} {{\alpha _{n,e}}}}} \right.
		\kern-\nulldelimiterspace} {{\alpha _{n,e}}}},n \in \tilde{\cal{K}}$. Thus, $ {c_i} = {{{\xi _i}} \mathord{\left/
		{\vphantom {{{\xi _i}} {{p_i}}}} \right.
		\kern-\nulldelimiterspace} {{p_i}}} (i = 0,..., K-1)$ can be obtained.

Checking the convergence involves comparing the absolute value of the gradient \( |f'(\alpha_k)| \) or the step size \( |\alpha_{k+1} - \alpha_k| \) with a threshold. This is a constant-time operation, \( O(1) \).
Let \( N \) be the number of iterations required for convergence. Each iteration involves computing the gradient and Hessian, both of which have a complexity of \( O(K) \).
The parameter update and convergence check are both \( O(1) \).
Therefore, the complexity per iteration is \( O(K) \), and the total complexity of the algorithm is
$ O(NK) $.

Different from common research work on physical layer security, this work not only ensures the secure transmission of the legitimate signal but also aims to achieve the deception of the false signal. Based on the original problem modeling, we have the following analytical results.
\begin{theorem}
As the transmission power \( p_k \) of the \( k \)-th user increases, the secrecy rate \( R_{s,k} \) of the \( k \)-th user increases monotonically, but the rate of increase gradually decreases.
\end{theorem}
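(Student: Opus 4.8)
The plan is to treat $R_{s,k}$ in (\ref{eq28}) as a one-dimensional function of $p_k$, holding the interference-plus-noise denominators $C_k$ and $C_{k,e}$ (which depend only on $\{p_i\}_{i\neq k}$) together with the channel gains $\alpha_k=|h_k|^2$ and $\alpha_{k,e}=|h_{e,k}|^2$ fixed and strictly positive, and then to establish the two claims by inspecting the first and second derivatives. First I would rewrite $R_{s,k}=\frac{1}{\ln 2}\big[\ln(C_k+p_k\alpha_k)-\ln(C_{k,e}+p_k\alpha_{k,e})+\ln C_{k,e}-\ln C_k\big]$, using that $\log_2$ and $\ln$ differ only by the positive factor $1/\ln 2$, so the sign of every derivative is preserved.

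For monotonicity I would differentiate once to obtain
\[
\frac{\partial R_{s,k}}{\partial p_k}=\frac{1}{\ln 2}\Big(\frac{\alpha_k}{C_k+p_k\alpha_k}-\frac{\alpha_{k,e}}{C_{k,e}+p_k\alpha_{k,e}}\Big),
\]
and then clear denominators: since every quantity is positive, the derivative is positive if and only if $\alpha_k(C_{k,e}+p_k\alpha_{k,e})>\alpha_{k,e}(C_k+p_k\alpha_k)$. The cross terms $p_k\alpha_k\alpha_{k,e}$ cancel, leaving the $p_k$-independent inequality $\alpha_k C_{k,e}>\alpha_{k,e}C_k$, equivalently $|h_k|^2/C_k>|h_{e,k}|^2/C_{k,e}$. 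This says Bob's per-stream effective channel quality dominates Eve's, which is exactly the positive-secrecy regime enforced by the CTSF design: Bob cancels the fake-signal interference so that $C_k<C_{k,e}$, while Eve suffers interference from the full spectrum (cf. the comparison of (\ref{eq3}) and (\ref{eq5})). Under this condition the derivative is strictly positive for every $p_k\ge 0$, proving that $R_{s,k}$ increases monotonically.

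For the diminishing rate of increase I would compute the second derivative,
\[
\frac{\partial^2 R_{s,k}}{\partial p_k^2}=\frac{1}{\ln 2}\Big(-\frac{\alpha_k^2}{(C_k+p_k\alpha_k)^2}+\frac{\alpha_{k,e}^2}{(C_{k,e}+p_k\alpha_{k,e})^2}\Big),
\]
and then observe the key structural point: the bracket is negative precisely when $\frac{\alpha_k}{C_k+p_k\alpha_k}>\frac{\alpha_{k,e}}{C_{k,e}+p_k\alpha_{k,e}}$, which is the very inequality already shown to hold in the monotonicity step. Hence no additional assumption is required, concavity follows for free once monotonicity is established, and $R_{s,k}$ is strictly concave in $p_k$, so its rate of increase decreases.

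The main obstacle, I expect, is not the calculus but pinning down the sign condition: the statement reads as unconditional, yet monotonicity genuinely hinges on $\alpha_k C_{k,e}>\alpha_{k,e}C_k$. I would therefore argue carefully that this holds throughout the feasible region of $\mathcal{P}2$, leveraging that the CTSF mechanism forces $C_k<C_{k,e}$ because Bob removes the decoy interference that Eve cannot, so that the monotonicity and concavity conclusions hold simultaneously for all admissible $p_k$.
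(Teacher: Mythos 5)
Your proposal is correct and follows the same basic route as the paper's proof: write $R_{s,k}$ as a difference of logs in $p_k$ with $C_k$, $C_{k,e}$ held fixed, then examine the first and second derivatives. However, your treatment is noticeably more careful than the paper's, and the differences are worth spelling out. The paper computes the same first derivative and then simply asserts $\frac{dR_{s,k}}{dp_k}>0$; you correctly observe that positivity is \emph{not} unconditional but reduces, after clearing denominators and cancelling the cross terms, to the $p_k$-independent condition $\alpha_k C_{k,e}>\alpha_{k,e}C_k$, i.e.\ $|h_k|^2/C_k>|h_{e,k}|^2/C_{k,e}$, which is precisely the positive-secrecy regime ($\gamma_k>\gamma_{e,k}$ for all $p_k>0$). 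Your justification that CTSF forces this (Bob cancels the decoy interference so $C_k<C_{k,e}$) is the right instinct, though strictly it also needs $|h_k|^2$ not too much smaller than $|h_{e,k}|^2$; flagging the condition explicitly is still an improvement over the paper. More importantly, your second derivative,
\begin{equation*}
\frac{\partial^2 R_{s,k}}{\partial p_k^2}=\frac{1}{\ln 2}\Big(-\frac{\alpha_k^2}{(C_k+p_k\alpha_k)^2}+\frac{\alpha_{k,e}^2}{(C_{k,e}+p_k\alpha_{k,e})^2}\Big),
\end{equation*}
is the correct one: the paper's version writes both terms with a negative sign (the derivative of $-\frac{a_{k,e}}{C_{k,e}(1+a_{k,e}p_k/C_{k,e})}$ is $+\frac{a_{k,e}^2}{C_{k,e}^2(\cdot)^2}$, not $-$), which makes concavity appear unconditional when it is not. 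Your observation that the bracket is negative exactly when the first-derivative inequality holds--so that concavity comes for free once monotonicity is established, with no extra hypothesis--is the correct resolution and repairs the paper's sign error. In short: same approach, but your version is the rigorous one.
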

\begin{proof}
Based on (\ref{eq4}), \(\gamma_k\) increases with \( p_k \) because the numerator \( p_k \left| h_k \right|^2 \) increases while the denominator remains constant, considering the other power allocations are determined.
 \(\gamma_{e,k}\) in (\ref{eq5}) also increases with \( p_k \).

Note that properties of the logarithmic function, \(\log_2 (1 + x)\) is a monotonically increasing function of \( x \), but its rate of increase decreases as \( x \) increases.
Since \(\gamma_k\) and \(\gamma_{k,e}\) increase with \( p_k \) and the logarithmic function is monotonically increasing, \(\log_2(1 + \gamma_k)\) and \(\log_2(1 + \gamma_{k,e})\) also increase with \( P_k \). However, the rate of increase of \(\log_2(1 + \gamma_k)\) decreases as \(\gamma_k\) increases, and similarly, the rate of increase of \(\log_2(1 + \gamma_{k,e})\) decreases as \(\gamma_{k,e}\) increases.

Let \( \gamma_k = \frac{a_k p_k}{C_k} \) and \( \gamma_{k,e} = \frac{a_{k,e} p_k}{C_{k,e}} \), where \( a_k = \left| h_k \right|^2 \) and \( a_{k,e} = \left| h_{e,k} \right|^2 \).
Calculate the first derivative of \( R_{s,k} \) with respect to \( p_k \):
\begin{align}\label{eq20}
\frac{dR_{s,k}}{dp_k} &= \frac{d}{dp_k} \left[ \log_2 \left(1 + \frac{a_k p_k}{C_k}\right) - \log_2 \left(1 + \frac{a_{k,e} p_k}{C_{k,e}}\right) \right]\nonumber	\\
&= \frac{a_k}{C_k \ln(2) \left(1 + \frac{a_k p_k}{C_k}\right)} - \frac{a_{k,e}}{C_{k,e} \ln(2) \left(1 + \frac{a_{k,e} p_k}{C_{k,e}}\right)}.
\end{align}
 We can see that \(\frac{dR_{s,k}}{dp_k} > 0\), i.e., \( R_{s,k} \) increases monotonically with \( p_k \). 
We further calculate the second derivative of \( R_{s,k} \) with respect to \( p_k \):
\begin{align}\label{key}
\frac{{{d^2}{R_{s,k}}}}{{dp_k^2}} &= \frac{1}{{\ln (2)}}\frac{d}{{d{p_k}}}\left[ {\frac{{{a_k}}}{{{C_k}\left( {1 + \frac{{{a_k}{p_k}}}{{{C_k}}}} \right)}} - \frac{{{a_{k,e}}}}{{{C_{k,e}}\left( {1 + \frac{{{a_{k,e}}{p_k}}}{{{C_{k,e}}}}} \right)}}} \right] \nonumber\\
&=  - \frac{1}{{\ln (2)}}\big( {\frac{{a_k^2}}{{C_k^2{{\left( {1 + \frac{{{a_k}{p_k}}}{{{C_k}}}} \right)}^2}}} + \frac{{a_{k,e}^2}}{{C_{k,e}^2{{\left( {1 + \frac{{{a_{k,e}}{p_k}}}{{{C_{k,e}}}}} \right)}^2}}}} \big).
\end{align}
The second derivative is negative, indicating that the rate of increase of \( R_{s,k} \) gradually decreases as \( p_k \) increases.
Therefore, as the transmission power \( p_k \) of the \( k \)-th user increases, the secrecy rate \( R_{s,k} \) of the \( k \)-th user increases monotonically, but the rate of increase gradually decreases.

\end{proof}

\begin{theorem}
	As the threshold ${{\tilde T}_h}$ for SINR at the intercepting end increases, under the power budget constraint, the power of the decoy signal $p_n$, $n \in \tilde{\mathcal{K}}$ must increase to ensure the SINR requirement in (\ref{9b}) is met. This may also necessitate the reallocation of the power of the true signal $p_k$ to maintain the total power budget.
\end{theorem}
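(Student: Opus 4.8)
The plan is to argue in two phases that mirror the two assertions of the statement: first the monotone growth of the decoy power $p_n$ with $\tilde{T}_h$, and then the resulting contraction of the budget available to the true signals under (\ref{9d}). I would begin by noting that at an optimum the deception constraint (\ref{9b}) is active (holds with equality), since any slack would waste power that could instead be shifted to a true signal and improve the objective (\ref{9a}). Setting $\tilde{\gamma}_{e,n} = \tilde{T}_h$ in the deception SINR (\ref{inSINR}) and solving for $p_n$ gives
\begin{equation*}
p_n = \frac{\tilde{T}_h}{|h_{e,n}|^2}\Big(\sum_{j \ne n} p_j |h_{e,j}|^2 c_j + \delta_{e,n}^2\Big),\quad n \in \tilde{\mathcal{K}},
\end{equation*}
which is precisely the closed form reported after (\ref{eq28}) once the interference-plus-noise aggregate is identified with $C_{n,e}$ and $\alpha_{n,e}=|h_{e,n}|^2$.

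Next I would establish the monotonicity. Treating the interference-plus-noise term $I_n = \sum_{j\ne n} p_j |h_{e,j}|^2 c_j + \delta_{e,n}^2$ as a fixed positive quantity, differentiation yields $\partial p_n/\partial \tilde{T}_h = I_n/|h_{e,n}|^2 > 0$, so the decoy power needed to hold the SINR floor is strictly increasing in $\tilde{T}_h$. For the reallocation claim I would invoke the power budget (\ref{9d}), split as $\sum_{k\in\mathcal{K}} p_k \le P_s - \sum_{n\in\tilde{\mathcal{K}}} p_n$. Since each $p_n$ rises with $\tilde{T}_h$, the headroom $P_s - \sum_{n} p_n$ available to the true signals strictly shrinks, so whenever the aggregate decoy demand would otherwise absorb the remaining budget, the set $\{p_k\}$ must be redistributed to keep the total within $P_s$.

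The hard part will be the coupling that the naive partial derivative hides: $I_n$ is not genuinely constant, because it contains the powers of the \emph{other} decoy signals, which themselves increase with $\tilde{T}_h$. To handle this rigorously I would write the $|\tilde{\mathcal{K}}|$ active constraints as a single linear system $(\mathbf{D} - \tilde{T}_h \mathbf{G})\,\mathbf{p}_{\tilde{\mathcal{K}}} = \tilde{T}_h\,\mathbf{b}$, where $\mathbf{D}$ is diagonal with entries $|h_{e,n}|^2$, $\mathbf{G}$ collects the cross-gains $|h_{e,j}|^2 c_j$ among decoy bands, and $\mathbf{b}$ carries the noise plus the interference from the (fixed) true signals. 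Provided feasibility holds (a spectral-radius condition on $\tilde{T}_h \mathbf{D}^{-1}\mathbf{G}$ below one), the solution $\mathbf{p}_{\tilde{\mathcal{K}}} = \tilde{T}_h (\mathbf{D} - \tilde{T}_h \mathbf{G})^{-1}\mathbf{b}$ is entrywise nonnegative and, by the monotonicity of the inverse for such $M$-matrix-type systems, nondecreasing in $\tilde{T}_h$. This upgrades the per-component derivative argument to the fully coupled regime and simultaneously confirms that the total decoy demand $\sum_n p_n$ grows, thereby tightening the budget constraint and forcing reallocation of $\{p_k\}$.
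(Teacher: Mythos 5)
Your proof is correct and reaches the same conclusion, but by a genuinely more rigorous route than the paper. The paper's own proof is qualitative: it invokes Theorem~1 (the SINR $\tilde{\gamma}_{e,n}$ is monotonically increasing in $p_n$ with the other powers held fixed) to conclude that a larger $\tilde{T}_h$ forces a larger $p_n$, and then simply records the budget inequality $P_{\text{total}} - \sum_{i\ne n} p_i \ge p_n$ to justify the reallocation claim; it never writes the required decoy power explicitly and does not address the interaction among multiple decoy bands. You instead solve the active constraint for $p_n$ in closed form (correctly recovering the expression the paper states after (\ref{eq28})), differentiate to get strict monotonicity, and---crucially---notice that the naive per-component argument is circular when $|\tilde{\mathcal{K}}|>1$ because each $I_n$ contains the other decoy powers. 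Your linear-system formulation $(\mathbf{D}-\tilde{T}_h\mathbf{G})\mathbf{p}_{\tilde{\mathcal{K}}}=\tilde{T}_h\mathbf{b}$ with the spectral-radius feasibility condition and Neumann-series monotonicity is the standard Perron--Frobenius power-control argument, and it closes a gap the paper leaves open; it also yields an explicit feasibility threshold on $\tilde{T}_h$ as a bonus. Two minor caveats: (i) your claim that (\ref{9b}) must be active at the optimum is not fully justified, since the decoy power also appears in the denominator of $\gamma_{e,k}$ and thus acts as jamming against interception, so slack in (\ref{9b}) is not automatically wasteful---but this does not matter for the theorem, which concerns the minimum power needed to satisfy (\ref{9b}); (ii) you hold the true-signal powers fixed inside $\mathbf{b}$, whereas a simultaneous reduction of $\{p_k\}$ would slightly relax the decoy requirement---the paper makes the same simplification, so this is not a defect relative to the intended statement.
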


\begin{proof}
	For a total power budget \( P_{\text{total}} \), satisfying \( \sum_{i} p_i \le P_{\text{total}} \), when the threshold \( {{\tilde T}_h} \) increases, based on the constraint \( \gamma_{n,e} \ge {{\tilde T}_h} \), we need to recalculate the power of the decoy signal \( p_n \). According to Theorem 1, the power \( p_n \) must also increase as \( {{\tilde T}_h} \) increases when other power allocations are fixed. However, under the total power budget constraint \( P_{\text{total}} \), increasing \( p_n \) will affect the power allocation of other signals.
	If \( p_n \) increases, to satisfy \( \sum_{i} p_i \le P_{\text{total}} \), the power allocated to the true signals \( p_k \) must be reduced. This reallocation can be expressed as
	\begin{equation}
		P_{\text{total}} - \sum_{i \ne n} p_i \ge p_n.
	\end{equation}

It is also assumed that the SINR at the legitimate receiver, \(\gamma_k\), can completely eliminate the interference from the decoy signal, but in practice, there is residual interference. Therefore, the SINR for the true signal should be adjusted to
\begin{equation}\label{eq20}
	{\gamma _k} = \frac{{{p_k}{{\left| {{h_k}} \right|}^2}}}{{\sum\limits_{i \in {\cal K}} {{p_i}{{\left| {{h_i}} \right|}^2}{c_i}}  + \sum\limits_{n \in \tilde {\cal K}} {{{\tilde p}_n}{{\left| {{h_n}} \right|}^2}{c_n}}  + \delta _k^2}},k \in {\cal K}.
\end{equation}

The constraint \( \gamma_{n,e} \ge {{\tilde T}_h} \) implies that the SINR of the decoy signal at the intercepting end must reach or exceed the threshold \( {{\tilde T}_h} \). As \( {{\tilde T}_h} \) increases, the required SINR for the decoy signal also increases, necessitating an increase in the decoy signal power \( p_n \). Consequently, with a higher \({{\tilde T}_h} \), \( p_n \) must increase to ensure adequate SINR at the intercepting end. This increase in \( p_n \) impacts the overall power budget, potentially constraining the power allocation for the true signals \( p_k \). To maintain the total power budget, \( p_k \) may need to be reduced, depending on the power allocation strategy in optimization problem \( \mathcal{P}1 \). Efficient power allocation within the total power budget becomes crucial to meet all constraints, including \( \gamma_{n,e} \ge {{\tilde T}_h} \) and \( 0 \le p_i \le P_s \), without significantly reducing the SINR of the true signals.
\end{proof}

In addition, since $ {\xi _i} = {p_i}{c_i} $ in (\ref{eqob}), we further discuss the impact of the frequency multiplexing factor on the performance of both Bob and Eve. The greater the T/F frequency reuse among transmitted signals, the higher the correlation between the signals, i.e., \(c_i\) increases. The derivative of Bob's SINR with respect to \( c_i \) is
\begin{equation}
	\frac{\partial \gamma_k}{\partial c_i} = -\frac{p_k \left| h_k \right|^2 \cdot p_i \left| h_i \right|^2}{\left(\sum_{j \ne k, j \in \mathcal{K}} p_j \left| h_j \right|^2 c_j + \delta_k^2\right)^2} \le 0.
\end{equation}
The derivative of Eve's SINR with respect to \( c_i \) is
\begin{equation}
	\frac{\partial \gamma_{e,k}}{\partial c_i} = -\frac{p_k \left| h_{e,k} \right|^2 \cdot p_i \left| h_{e,i} \right|^2}{\left(\sum_{j \ne k} p_j \left| h_{e,j} \right|^2 c_j + \delta_{e,k}^2\right)^2} \le 0.
\end{equation}

For fixed transmission powers among all sources, \( \xi_i \) increases as \( c_i \) increases.
Thus, the SINR of the received true signal in (\ref{eq3}) decreases as \( c_i \), (\( i \ne k, i \in \mathcal{K} \)) increases, and the SINR in (\ref{eq5}) and (\ref{inSINR}) decreases as well.
To analyze the effect of the correlation factor \( c_i \) on the secrecy rate \( R_s \), we calculate the derivatives of the SINR for both Bob and Eve with respect to \( c_i \).
\begin{align}
	\frac{\partial R_s}{\partial c_i} &= \frac{\partial \left( \log_2(1 + \gamma_k) - \log_2(1 + \gamma_{k,e}) \right)}{\partial c_i} \nonumber\\
	&= \frac{1}{\ln(2)} \big( \frac{\partial \gamma_k / \partial c_i}{1 + \gamma_k} - \frac{\partial \gamma_{k,e} / \partial c_i}{1 + \gamma_{k,e}} \big) \nonumber\\
	&= \frac{{{p_k}{{\left| {{h_{e,k}}} \right|}^2} \cdot {p_i}{{\left| {{h_{e,i}}} \right|}^2}}}{{\ln (2)(1 + {\gamma _{e,k}}){{\big( {\sum\limits_{j \ne k} {{p_j}} {{\left| {{h_{e,j}}} \right|}^2}{c_j} + 1} \big)}^2}}} \nonumber\\
	&- \frac{{{p_k}{{\left| {{h_k}} \right|}^2} \cdot {p_i}{{\left| {{h_i}} \right|}^2}}}{{\ln (2)(1 + {\gamma _k}){{\big( {\sum\limits_{j \ne k,j \in {\cal K}} {{p_j}} {{\left| {{h_j}} \right|}^2}{c_j} + 1} \big)}^2}}}.
\end{align}

Analyzing the derivative \(\frac{\partial R_s}{\partial c_i}\), we observe the following:
\begin{itemize}
	\item As \( c_i \) increases, the interference for both Bob and Eve increases.
	\item If Bob can mitigate the interference more effectively than Eve, then the reduction in \(\gamma_k\) will be less significant than the reduction in \(\gamma_{k,e}\), leading to an increase in the secrecy rate \( R_s \).
\end{itemize}

\section{Numerical Results}
We conduct simulations to demonstrate the effectiveness of our proposed scheme in an-intercepting transmissions. The simulation parameters were set as follows: the system involves four sources, denoted as $ K=4 $. Sources one and three operate at the true frequency and transmit genuine signals, while sources two and four operate at a false frequency and transmit decoy signals. 
The Rician fading channel model is employed for both legitimate and intercepting channels, with a Rician factor of 10 dB to account for strong line-of-sight components. All noise parameters are normalized with respect to the average received signal power, ensuring consistent SINR evaluation across different channel realizations.
	Since our proposed anti-interception scheme, which utilizes T/F frequency multiplexing, is a non-orthogonal frequency division multiplexing scheme, we have chosen OFDM as one of our benchmark systems. Additionally, an equal power allocation scheme has also been used as a benchmark, where the power of the true signal is equal to that of the decoy signal.

 \begin{figure}[ht]
	\centering
	\includegraphics[width=0.48\textwidth]{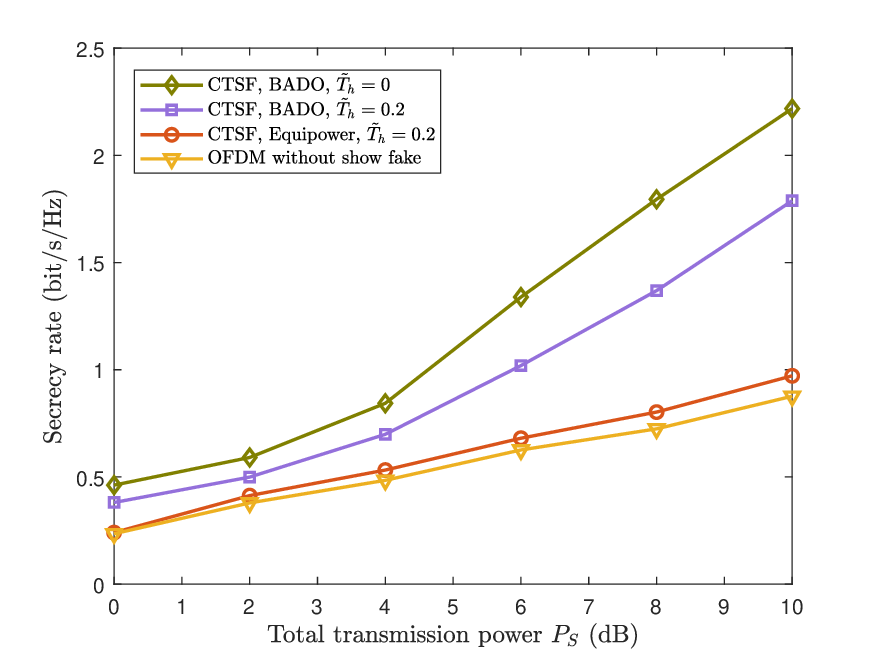}
	\caption {{The impact of total transmission power on secrecy rate performance.}}
	\label{fig2}
\end{figure}
Fig. \ref{fig2} shows the impact of total transmission power on the secrecy rate performance of true signals. From the Fig. \ref{fig2}, it is evident that the sum secrecy rate of true signals increases with the total transmission power. According to Theorem 1, as the total transmission power increases, more power can be allocated to true signals to maximize the sum secrecy rate, thereby enhancing the secrecy rate performance.
Our proposed CTSF scheme combined with the BADO approach outperforms the benchmark schemes, verifying its efficiency in both secrecy and deception performance. Compared to the OFDM scheme, the proposed CTSF with non-orthogonal interference between true and fake signals degrades the signal quality in the eavesdropping channel. Whereas, the equal power allocation approach, where the power of the true and decoy signals is the same, cannot always ensure the fulfillment of the constraint, i.e., ${{{\tilde \gamma }_{e,n}}} \ge {\gamma _{e,k}}, n \in \tilde {\cal K}$. Furthermore, as the total power increases, the power of the deceptive signal may be wasted under the given deception quality constraint, which subsequently limits the secure transmission performance of the true signal.

 \begin{figure}[ht]
	\centering
	\includegraphics[width=0.48\textwidth]{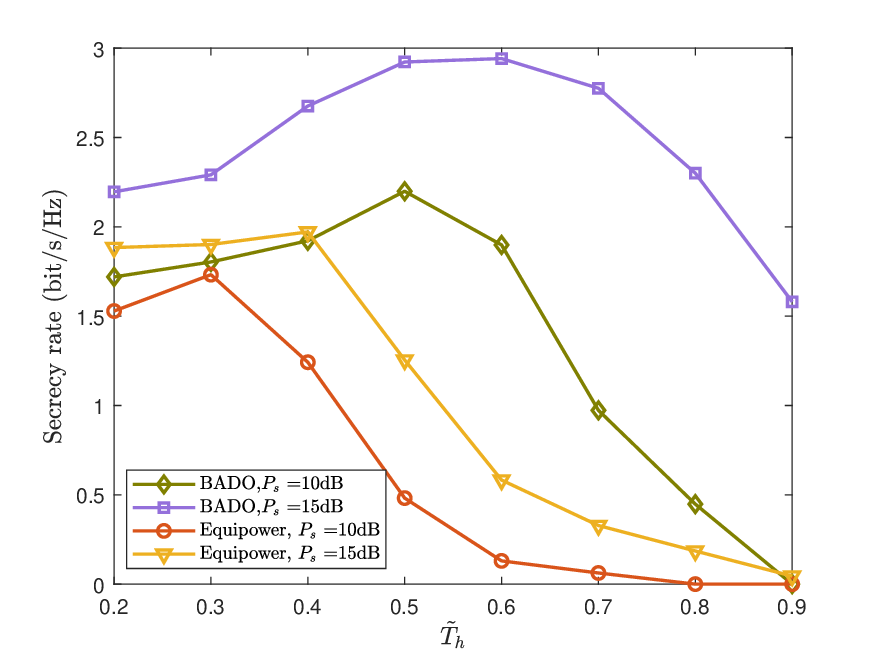}
	\caption {{Sum secrecy rate Vs. $ {{\tilde T}_h} $. }}
	\label{fig3}
\end{figure}
Fig. \ref{fig3} shows the impact of SINR threshold $ {{\tilde T}_h} $ constraint for decoy signals on secrecy rate performance. As illustrated in this figure, with the increase of 
$ {{\tilde T}_h} $, the sum secrecy rate of the true signals initially rises and then declines. Compared to the equal power allocation scheme, the proposed CSTF with BADO demonstrates superior secrecy rate performance. According to (7), as 
$ {{\tilde T}_h} $ increases, a higher power allocation to the deceptive signals is required to achieve the intended deception. This results in a reduction of the power allocated to the legitimate signals. However, when 
$ {{\tilde T}_h} $ is within a lower range, the interference caused by frequency reuse becomes the dominant factor. This interference significantly impairs the eavesdropping channel’s interception capability, which outweighs the reduction in the transmission power of the true signals. Consequently, the sum secrecy rate exhibits an upward trend. On the other hand, when 
$ {{\tilde T}_h} $ is within a higher range, more power is allocated to the deceptive signals to maintain deception quality, leaving less power for the transmission of legitimate signals, thereby causing a substantial decline in the secrecy rate.

 \begin{figure}[ht]
	\centering
	\includegraphics[width=0.48\textwidth]{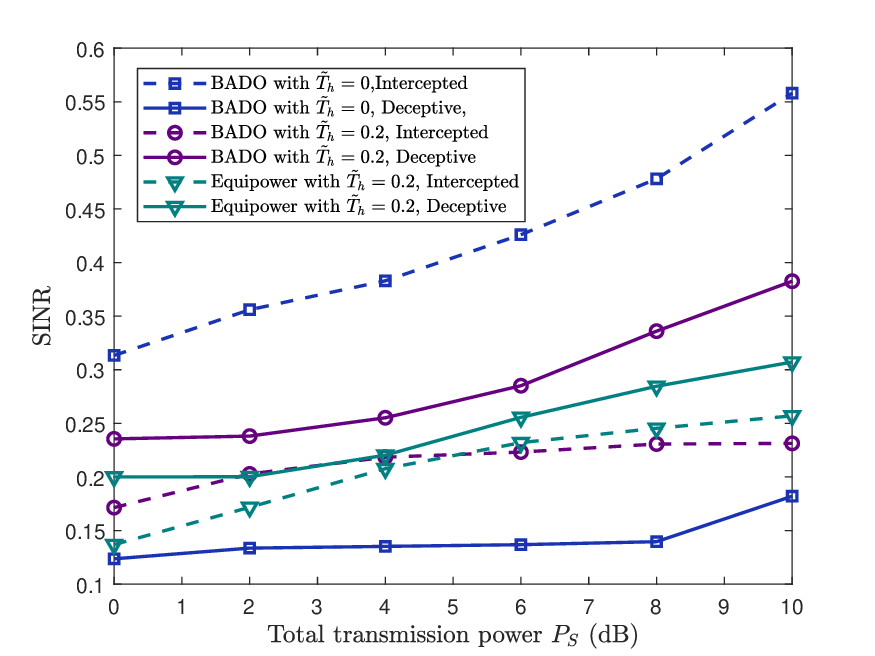}
	\caption {{The impact of total transmission power on the SINR of Eve.}}
	\label{fig4}
\end{figure}

Fig. \ref{fig4} shows the impact of total transmission power on Eve's SINR. We evaluate both the SINR of the true signal intercepted by Eve and the SINR of the decoy signal to which Eve is misled. The curves in Fig. \ref{fig4} reveal that when $ {{\tilde T}_h} =0$, indicating no decoy constraints, the intercepted SINR is significantly higher than the deceptive SINR. This observation, in conjunction with the results shown in Fig. \ref{fig2}, suggests that although a higher secure rate can be achieved under these conditions, the decoy objective is not met. When effective decoy constraints are applied, the decoy signal dominates Eve's received signal, resulting in a deceptive SINR that exceeds the intercepted SINR. This outcome aligns with the modeling of the proposed CTSF scheme. Furthermore, as the total transmission power increases, the SINR curves exhibit an upward trend, corroborating the analysis presented in Theorem 1. Comparatively, our proposed CTSF scheme demonstrates superior decoy effectiveness over equal power allocation strategies.

 \begin{figure}[ht]
	\centering
	\includegraphics[width=0.48\textwidth]{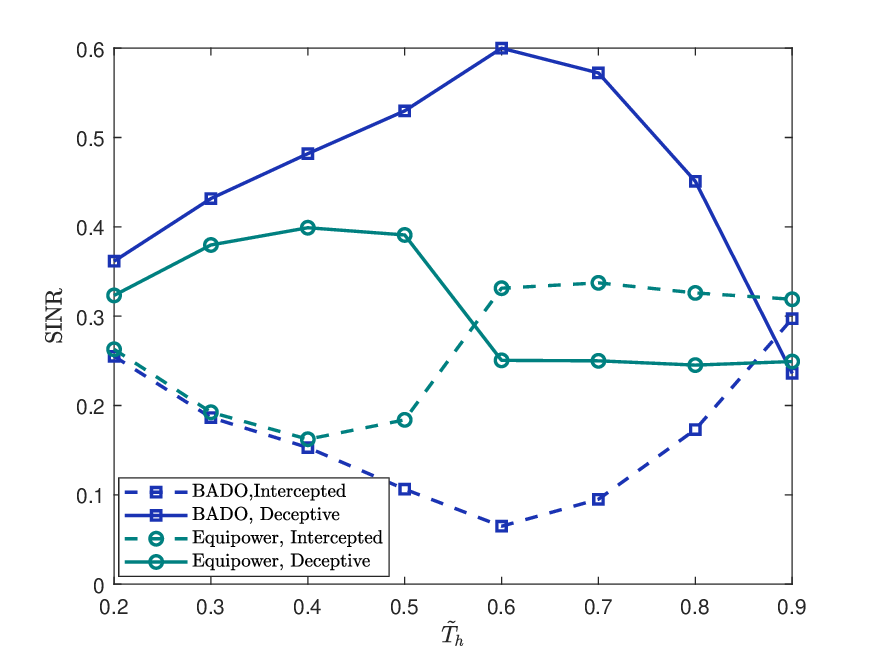}
	\caption {{SINR Vs. $ {{\tilde T}_h} $. ( $P_s = 10$ dB, $K = 4$)}}
	\label{fig5}
\end{figure}

Fig. \ref{fig5} shows the impact of the deceptive quality constraint threshold $ {{\tilde T}_h} $ on Eve's SINR. Similarly, we evaluate both the average SINR of the intercepted signal and the decoy signal. From Fig. \ref{fig5}, it can be observed that as $ {{\tilde T}_h} $ increases, the average deceptive SINR by the decoy signal initially increases and then decreases, while the average SINR of the intercepted signal initially decreases and then increases. This is because more power should be allocated to transmit the decoy signal to ensure its purpose and effectiveness, which in turn reduces the transmission power allocated to the true signal, leading to a decrease in the intercepted SINR.
 Notable intersections occur between interception and deception probability curves at ${{\tilde T}_h} \approx 0.56$ (equipower case) and ${{\tilde T}_h} \approx 0.88$ (BADO-based case). This indicates that the reception quality of true signals in the cooperative link achieves parity with the deception effectiveness of decoy signals. Notably, the intersection point generated by our proposed BADO method occurs at a higher ${{\tilde T}_h}$ value, demonstrating enhanced anti-interception capability that can counter sophisticated interceptors resistant to conventional deception tactics.

In addition to power allocation, the T/F frequency multiplexing factor determines the extent of spectrum overlap between the true and fake signals, affecting the level of interference between them and consequently impacting the SINR. Specifically, when $ {{\tilde T}_h} $ increases within a certain range, the power required for transmitting the decoy signal is relatively easy to satisfy, and according to the decoy condition in (\ref{9b}), i.e., $ {{{\tilde \gamma }_{e,n}} \ge {{\tilde T}_h}},n \in \mathcal{\tilde K} $, the deceptive SINR will increase with $ {{\tilde T}_h} $. However, when $ {{\tilde T}_h} $ continues to increase, not only must more power be tilted towards the decoy signal, but the interference between the true and decoy signals must also be reduced. Despite these adjustments, it becomes challenging to satisfy constraint (\ref{9b}), resulting in an increase in the intercepted SINR.

 \begin{figure}[ht]
	\centering
	\includegraphics[width=0.48\textwidth]{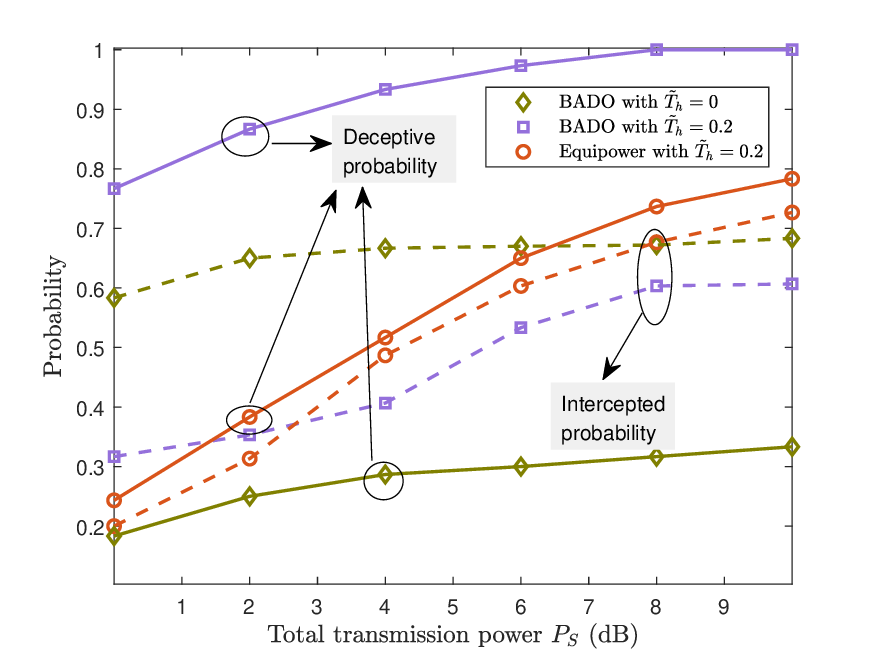}
	\caption {{The impact of total transmission power on the intercepted and deceptive probabilities. }}
	\label{fig6}
\end{figure}
In Fig. \ref{fig6}, we evaluate the impact of total transmission power on the intercepted and deceptive probabilities of the proposed CTSF scheme. It can be observed that as the total power increases, both the intercepted probability and the deceptive probability increase. This is because when more power is allocated to transmitting the false decoy signal, the decoy quality constraint threshold 
$ {{\tilde T}_h} $ is more easily satisfied. According to the definition in (\ref{eq8}), for a given 
$ {{\tilde T}_h} $, the decoy probability evidently increases with increasing power.
Although our objective function models the maximization of the secure transmission rate of the authentic signal, as the transmission power increases, the numerator in (\ref{eq5}) also increases, which in turn raises the probability of the true signal being intercepted.
In comparison to the equal power allocation scheme, our proposed BADO approach demonstrates a higher deceptive probability and a lower intercepted probability, indicating superior performance in CTSF. Additionally, without constraint on showing fake, the deceptive probability would be too low and the intercepted probability too high, failing to achieve the objective of CTSF.

 \begin{figure}[ht]
	\centering
	\includegraphics[width=0.48\textwidth]{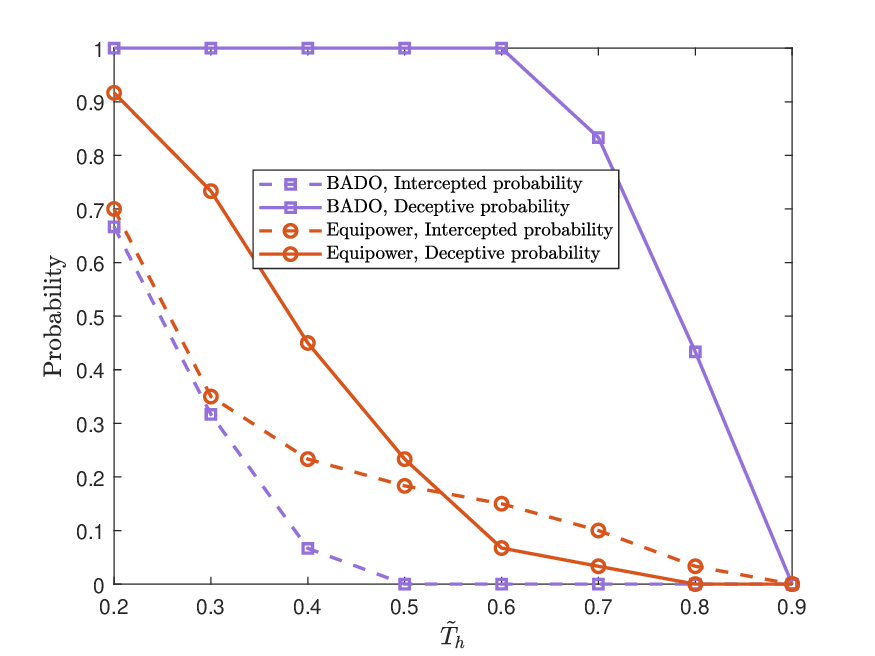}
	\caption {{The impact of $ {{\tilde T}_h} $ on the intercepted and deceptive probabilities.}}
	\label{fig7}
\end{figure}

Fig. \ref{fig7} shows the impact of the decoy condition constraint threshold $ {{\tilde T}_h} $ on the intercepted and deceptive probabilities. As $ {{\tilde T}_h} $ increases, the intercepted probability decreases rapidly. This is because ensuring the decoy condition requires more power allocation and reduced interference, which is consistent with the SINR value changes analyzed in Fig. \ref{fig5}. Compared to the equal power allocation scheme, the BADO method is validated to achieve a higher decoy probability and a lower interception probability. Specifically, when $ {{\tilde T}_h} $ is within a smaller range ($ {{\tilde T}_h} \le 0.6$), the decoy probability remains at 1, indicating that the decoy condition constraint can be fully satisfied within this range. However, as $ {{\tilde T}_h} $ continues to increase, it becomes challenging to meet the decoy condition constraint, resulting in a decrease in the decoy probability. 
In addition, Fig. \ref{fig7} demonstrates that when the deception threshold $ {{\tilde T}_h} $ lies within about [0.5, 0.6], the CTSF system achieves perfect deception performance with zero interception probability, realizing the optimal CTSF trade-off. The threshold $ {{\tilde T}_h} $ characterizes Eve's detection capability, where increased  values correspond to more sophisticated eavesdroppers requiring stronger deception signals.

\section{Conclusion}
To address interception threats in wireless adversarial communications, the CTSF for anti-interception transmission scheme based on T/F frequency multiplexing has been proposed in this paper. 
By multi-source cooperation, true signals carrying confidential information and decoy signals are transmitted on different frequencies, creating an overlap in the frequency domain. This approach can conceal true information while ensuring decoy signals mislead Eve.
Particularly, a problem has been formulated for maximizing the secrecy rate of true signals through joint optimization of user power allocation and T/F frequency multiplexing factor. Constraints ensure that decoy signals received by Eve are stronger than true signals, with a specified deceptive SINR threshold. Both BADO and Newton's methods are proposed to solve this optimization problem.
Finally, performance evaluations in terms of secrecy rate, intercepted probability, and deceptive probability demonstrate the effectiveness of our proposed CTSF scheme. 
 In future work, we will explore intelligent adversarial game-theoretic strategies within this integrated offensive-defensive anti-interception architecture, with adaptive countermeasures for diverse operational scenarios and mission-specific threat profiles.

\bibliographystyle{IEEEtran}
\bibliography{IEEEabrv,TF1}

\begin{thebibliography}{10}
\providecommand{\url}[1]{#1}
\csname url@samestyle\endcsname
\providecommand{\newblock}{\relax}
\providecommand{\bibinfo}[2]{#2}
\providecommand{\BIBentrySTDinterwordspacing}{\spaceskip=0pt\relax}
\providecommand{\BIBentryALTinterwordstretchfactor}{4}
\providecommand{\BIBentryALTinterwordspacing}{\spaceskip=\fontdimen2\font plus
\BIBentryALTinterwordstretchfactor\fontdimen3\font minus \fontdimen4\font\relax}
\providecommand{\BIBforeignlanguage}[2]{{%
\expandafter\ifx\csname l@#1\endcsname\relax
\typeout{** WARNING: IEEEtran.bst: No hyphenation pattern has been}%
\typeout{** loaded for the language `#1'. Using the pattern for}%
\typeout{** the default language instead.}%
\else
\language=\csname l@#1\endcsname
\fi
#2}}
\providecommand{\BIBdecl}{\relax}
\BIBdecl

\bibitem{Ahuja2021}
B.~Ahuja, D.~Mishra, and R.~Bose, ``Fair subcarrier allocation for securing {OFDMA} in {IoT} against full-duplex hybrid attacker,'' \emph{{IEEE} Trans. Inf. Forensics Security}, vol.~16, pp. 2898--2911, 2021.

\bibitem{Abughalwa2020}
M.~Abughalwa, L.~Samara, M.~O. Hasna, and R.~Hamila, ``Full-duplex jamming and interception analysis of {UAV}-based intrusion links,'' \emph{{IEEE} Commun. Lett.}, vol.~24, no.~5, pp. 1105--1109, 2020.

\bibitem{Guri2023}
M.~Guri, ``Air-gap electromagnetic covert channel,'' \emph{{IEEE} Trans. Depend. Sec. Comput.}, pp. 1--18, 2023.

\bibitem{Du2023}
H.~Du, J.~Wang, D.~Niyato, J.~Kang, Z.~Xiong, M.~Guizani, and D.~I. Kim, ``Rethinking wireless communication security in semantic internet of things,'' \emph{{IEEE} Wireless Commun.}, vol.~30, no.~3, pp. 36--43, 2023.

\bibitem{Wang2021}
J.~Wang, Y.~Liu, and H.~Song, ``Counter-unmanned aircraft system(s) (c-uas): State of the art, challenges, and future trends,'' \emph{IEEE Aerosp. Electron. Syst. Mag.}, vol.~36, no.~3, pp. 4--29, 2021.

\bibitem{Roy2022}
S.~Roy, N.~Sharmin, J.~C. Acosta, C.~Kiekintveld, and A.~Laszka, ``Survey and taxonomy of adversarial reconnaissance techniques,'' \emph{ACM Comput. Surv.}, vol.~55, no.~6, pp. 1--38, 2022.

\bibitem{Venkatesh2021}
S.~Venkatesh, X.~Lu, B.~Tang, and K.~Sengupta, ``Secure space-time-modulated millimetre-wave wireless links that are resilient to distributed eavesdropper attacks,'' \emph{Nature Electronics}, vol.~4, no.~11, pp. 827--836, 2021.

\bibitem{Sahin2022}
M.~M. Şahin, H.~Arslan, and K.-C. Chen, ``Control of electromagnetic radiation on coexisting smart radio environment,'' \emph{IEEE Open Journal of the Communications Society}, vol.~3, pp. 557--573, 2022.

\bibitem{Yan2023}
W.~Yan, J.~An, J.~Song, Y.~Li, and S.~Wang, ``Multicarrier spread spectrum for mega-constellation satellite networks: Challenges, opportunities, and future trends,'' \emph{{IEEE} Internet Things J.}, vol.~10, no.~23, pp. 20\,358--20\,367, 2023.

\bibitem{Nguyen2021}
V.-L. Nguyen, P.-C. Lin, B.-C. Cheng, R.-H. Hwang, and Y.-D. Lin, ``Security and privacy for 6g: A survey on prospective technologies and challenges,'' \emph{IEEE Commun. Surv. {\&} Tutor.}, vol.~23, no.~4, pp. 2384--2428, 2021.

\bibitem{Takeshita2024}
H.~Takeshita, A.~A. Fathnan, D.~Nita, A.~Nagata, S.~Sugiura, and H.~Wakatsuchi, ``Frequency-hopping wave engineering with metasurfaces,'' \emph{Nat Commun}, vol.~15, no.~1, p. 196, 2024.

\bibitem{Yin2023}
Z.~Yin, N.~Cheng, T.~H. Luan, Y.~Song, and W.~Wang, ``Dt-assisted multi-point symbiotic security in space-air-ground integrated networks,'' \emph{{IEEE} Trans. Inf. Forensics Security}, vol.~18, pp. 5721--5734, 2023.

\bibitem{Zhang2021}
J.~Zhang, H.~Du, Q.~Sun, B.~Ai, and D.~W.~K. Ng, ``Physical layer security enhancement with reconfigurable intelligent surface-aided networks,'' \emph{{IEEE} Trans. Inf. Forensics Security}, vol.~16, pp. 3480--3495, 2021.

\bibitem{Yin2022}
Z.~Yin, M.~Jia, N.~Cheng, W.~Wang, F.~Lyu, Q.~Guo, and X.~Shen, ``Uav-assisted physical layer security in multi-beam satellite-enabled vehicle communications,'' \emph{{IEEE} Trans. Intell. Transp. Syst.}, vol.~23, no.~3, pp. 2739--2751, 2022.

\bibitem{Liu2023}
Y.~Liu, Z.~Su, C.~Zhang, and H.-H. Chen, ``Minimization of secrecy outage probability in reconfigurable intelligent surface-assisted mimome system,'' \emph{{IEEE} Trans. Wireless Commun.}, vol.~22, no.~2, pp. 1374--1387, 2023.

\bibitem{Li2024}
X.~Li, Z.~Tian, W.~He, G.~Chen, M.~C. Gursoy, S.~Mumtaz, and A.~Nallanathan, ``Covert communication of star-ris aided noma networks,'' \emph{{IEEE} Trans. Veh. Technol.}, vol.~73, no.~6, pp. 9055--9060, 2024.

\bibitem{Jiang2021}
X.~Jiang, X.~Chen, J.~Tang, N.~Zhao, X.~Y. Zhang, D.~Niyato, and K.-K. Wong, ``Covert communication in uav-assisted air-ground networks,'' \emph{{IEEE} Wireless Commun.}, vol.~28, no.~4, pp. 190--197, 2021.

\bibitem{Zhang2021a}
Q.~Zhang, M.~Bakshi, and S.~Jaggi, ``Covert communication over adversarially jammed channels,'' \emph{{IEEE} Trans. Inf. Theory}, vol.~67, no.~9, pp. 6096--6121, 2021.

\bibitem{Li2023}
Q.~Li, D.~Xu, K.~Navaie, and Z.~Ding, ``Covert and secure communications in noma networks with internal eavesdropping,'' \emph{{IEEE} Wireless Commun. Lett.}, vol.~12, no.~12, pp. 2178--2182, 2023.

\bibitem{Son2024}
B.~D. Son, N.~T. Hoa, T.~V. Chien, W.~Khalid, M.~A. Ferrag, W.~Choi, and M.~Debbah, ``Adversarial attacks and defenses in 6g network-assisted iot systems,'' \emph{IEEE Internet of Things Journal}, vol.~11, no.~11, pp. 19\,168--19\,187, 2024.

\bibitem{Jiang2023}
B.~Jiang, Y.~Yan, L.~You, J.~Wang, W.~Wang, and X.~Gao, ``Robust secure transmission for satellite communications,'' \emph{{IEEE} Trans. Aerosp. Electron. Syst.}, vol.~59, no.~2, pp. 1598--1612, 2023.

\bibitem{Lyu2023}
B.~Lyu, C.~Zhou, S.~Gong, D.~T. Hoang, and Y.-C. Liang, ``Robust secure transmission for active ris enabled symbiotic radio multicast communications,'' \emph{{IEEE} Trans. Wireless Commun.}, vol.~22, no.~12, pp. 8766--8780, 2023.

\bibitem{Yin2023a}
Z.~Yin, N.~Cheng, Y.~Hui, W.~Wang, L.~Zhao, K.~Aldubaikhy, and A.~Alqasir, ``Multi-domain resource multiplexing based secure transmission for satellite-assisted iot: Ao-sca approach,'' \emph{{IEEE} Trans. Wireless Commun.}, vol.~22, no.~11, pp. 7319--7330, 2023.

\bibitem{Albayrak2023}
C.~Albayrak, S.~Cetinkaya, K.~Turk, and H.~Arslan, ``Physical layer security for visible light communication in reflected indoor environments with inter-symbol interference,'' \emph{{IEEE} Trans. Inf. Forensics Security}, vol.~18, pp. 2709--2722, 2023.

\bibitem{Wang2024}
F.~Wang and A.~L. Swindlehurst, ``Applications of absorptive reconfigurable intelligent surfaces in interference mitigation and physical layer security,'' \emph{{IEEE} Trans. Wireless Commun.}, vol.~23, no.~5, pp. 3918--3931, 2024.

\bibitem{Zhou2023}
J.~Zhou, W.~Hou, Y.~Mao, and C.~Tellambura, ``Artificial noise assisted secure transmission for uplink mimo rate splitting healthcare systems,'' \emph{{IEEE} Commun. Lett.}, vol.~27, no.~12, pp. 3176--3180, 2023.

\bibitem{Roth2024}
S.~Roth and A.~Sezgin, ``Emf-constrained artificial noise for secrecy rates with stochastic eavesdropper channels,'' \emph{{IEEE} Wireless Commun. Lett.}, pp. 1--1, 2024.

\bibitem{Marin2023}
J.~Marin, M.~Bernhardt, and T.~Riihonen, ``Full-duplex constant-envelope jamceiver and self-interference suppression by highpass filter: Experimental validation for wi-fi security,'' \emph{{IEEE} J. Sel. Areas Commun.}, vol.~41, no.~9, pp. 2937--2950, 2023.

\bibitem{Wei2023}
S.~Wei, Y.~Li, K.~Zhu, C.~Lei, Y.~Li, M.~Pan, W.~Wang, Y.~Zhao, and J.~Zhang, ``Physical layer secure key distribution based on artificial amplitude noise in qam/qnsc optical communication systems,'' \emph{{IEEE} Commun. Lett.}, vol.~27, no.~9, pp. 2288--2292, 2023.

\bibitem{Jiang2024}
Y.~Jiang, L.~Wang, H.-H. Chen, and X.~Shen, ``Physical layer covert communication in b5g wireless networks—its research, applications, and challenges,'' \emph{Proc. {IEEE}}, vol. 112, no.~1, pp. 47--82, 2024.

\bibitem{Chen2023}
X.~Chen, J.~An, Z.~Xiong, C.~Xing, N.~Zhao, F.~R. Yu, and A.~Nallanathan, ``Covert communications: A comprehensive survey,'' \emph{IEEE Commun. Surv. {\&} Tutor.}, vol.~25, no.~2, pp. 1173--1198, 2023.

\bibitem{Bai2023}
L.~Bai, J.~Xu, and L.~Zhou, ``Covert communication for spatially sparse mmwave massive mimo channels,'' \emph{{IEEE} Trans. Commun.}, vol.~71, no.~3, pp. 1615--1630, 2023.

\bibitem{Feng2024}
S.~Feng, X.~Lu, D.~Niyato, E.~Hossain, and S.~Sun, ``Achieving covert communication in large-scale swipt-enabled d2d networks,'' \emph{{IEEE} Trans. Wireless Commun.}, vol.~23, no.~5, pp. 5272--5286, 2024.

\bibitem{Forouzesh2020}
M.~Forouzesh, P.~Azmi, A.~Kuhestani, and P.~L. Yeoh, ``Covert communication and secure transmission over untrusted relaying networks in the presence of multiple wardens,'' \emph{{IEEE} Trans. Commun.}, vol.~68, no.~6, pp. 3737--3749, 2020.

\bibitem{Su2021}
Y.~Su, H.~Sun, Z.~Zhang, Z.~Lian, Z.~Xie, and Y.~Wang, ``Covert communication with relay selection,'' \emph{{IEEE} Wireless Commun. Lett.}, vol.~10, no.~2, pp. 421--425, 2021.

\bibitem{Wang2024a}
X.~Wang, Z.~Fei, P.~Liu, J.~A. Zhang, Q.~Wu, and N.~Wu, ``Sensing aided covert communications: Turning interference into allies,'' \emph{{IEEE} Trans. Wireless Commun.}, pp. 1--1, 2024.

\bibitem{Zhang2021b}
Q.~Zhang, M.~Bakshi, and S.~Jaggi, ``Covert communication over adversarially jammed channels,'' \emph{{IEEE} Trans. Inf. Theory}, vol.~67, no.~9, pp. 6096--6121, 2021.

\bibitem{Forouzesh2023}
M.~Forouzesh, F.~Samsami~Khodadad, P.~Azmi, A.~Kuhestani, and H.~Ahmadi, ``Simultaneous secure and covert transmissions against two attacks under practical assumptions,'' \emph{{IEEE} Internet Things J.}, vol.~10, no.~12, pp. 10\,160--10\,171, 2023.

\end{thebibliography}
\end{document}